\renewcommand{\@biblabel}[1]{\quad#1.}
\definecolor{Gray}{gray}{.25}
\newcolumntype{d}{D{.}{.}{2.5}}           
\newtheorem{theorem}{Theorem}
\newtheorem{lemma}{Lemma}
\begin{document}
\vspace*{0.35in}

\begin{flushleft}
{\Large
\textbf\newline{A Scalable MCEM Estimator for Spatio-Temporal Autoregressive Models}
}
\newline
\\
Philipp Hunziker,\textsuperscript{1*}
Julian Wucherpfennig,\textsuperscript{2}
Aya Kachi,\textsuperscript{3}
Nils-Christian Bormann\textsuperscript{4}
\\
\bigskip
\bf{1} Northeastern University
\\
\bf{2} Hertie School of Governance
\\
\bf{3} University of Basel
\\
\bf{4} University of Exeter

\bigskip
* Corresponding author: \href{mailto:hunzikp@gmail.com}{hunzikp@gmail.com}

\bigskip

\end{flushleft}

\section*{Abstract}
Very large spatio-temporal lattice data are becoming increasingly common across a variety of disciplines. However, estimating interdependence across space and time in large areal datasets remains challenging, as existing approaches are often (i) not scalable, (ii) designed for conditionally Gaussian outcome data, or (iii) are limited to cross-sectional and univariate outcomes. This paper proposes an MCEM estimation strategy for a family of latent-Gaussian multivariate spatio-temporal models that addresses these issues. The proposed estimator is applicable to a wide range of non-Gaussian outcomes, and implementations for binary and count outcomes are discussed explicitly. The methodology is illustrated on simulated data, as well as on weekly data of IS-related events in Syrian districts.

\section{Introduction}

Over the last decade, there has been considerable growth in the availability of spatially indexed data. This is especially the case for event-type data, which record the time and place of a specific type of occurrence (i.e. an `event'), and have become increasingly common across a range of subjects, including political violence, crime, epidemiology, traffic, geo-tagged social media data, cell-phone activity, and more. 

A common and natural approach for analyzing such data is to aggregate them onto a spatial lattice, that is, to map the events onto distinct tiles fully covering the study area.\footnote{An alternative approach which we do not address in this paper is to treat event-type data as a spatial point process, see e.g. \citealt{baddeley2015spatial}.}
In lattice data, it is often of interest to estimate the presence and magnitude of spatial (inter-)dependence, i.e. the degree to which the outcome of unit $i$ affects the outcomes of its neighbors and vice-versa. 
Modeling such spatial interdependence is at the core of much of the spatial econometrics literature, with the spatial autoregressive (SAR) model being the workhorse tool for that purpose (see \citealt{lesage2009introduction}; \citealt{anselin2010thirty}). In its simplest and most widely applied form, the SAR model is designed for a conditionally normal outcome variable observed over a cross-section. However, empirical applications rarely meet this requirement. For one, many outcomes of interest are discrete. Importantly, this is near-always the case if the outcome is constructed by aggregating event-type data onto a lattice, which will typically result in a count or binary outcome. Moreover, as recording spatially indexed data becomes easier, researchers increasingly have access to time-variant data. In this setting, it is often desirable to structure the data as a panel, and to model temporal \emph{and} spatial dependence simultaneously. And finally, many interesting problems involve more than one outcome, and researchers may be interested in studying their interaction while taking into account spatial dependence. 

Extending the simple SAR model to address these complications is non-trivial.
Perhaps most importantly, the standard estimation strategy for the normal SAR model -- maximum likelihood estimation (MLE) -- does not travel well to settings with non-normal outcomes because the respective log-likelihood functions involve a high-dimensional integrals that cannot be evaluated directly.
In addition, implementing an estimation strategy for the SAR model that scales well with large datasets is challenging, regardless of the outcome distribution. Naive implementations often rely on conceptually simple but computationally costly matrix operations which make estimating large models impractical (see e.g. \citealt{bivand2013computing}). 

This paper addresses these challenges by proposing a Monte Carlo Expectation Maximization (MCEM) estimation strategy for a class of multivariate spatio-temporal autoregressive models for non-Gaussian outcome variables. 
Specifically, we introduce an MCEM estimator for a family of models where a multivariate spatio-temporal autoregressive process operates on a latent variable with a conditionally Gaussian distribution; the observed outcomes are then assumed to be iid distributed conditional on the (transformed) latent variable. This setup permits the researcher to freely choose an appropriate outcome distribution for the type of data being modeled, such as a Poisson distribution for count data. 

The proposed estimator features a number of benefits. First, it is applicable to a large class of non-Gaussian outcomes. We discuss and provide implementations for binary and count data, but extending the estimator to other types of outcomes (e.g. censored or multinomial) is straightforward. 
Second, we take appropriate steps to ensure that our estimator is scalable, and thus applicable to large datasets. 
For univariate outcomes and settings where each unit has a constant number of spatial neighbors, our proposed implementation of the expected log-likelihood function has linear time complexity in the number of units and time-periods. For models with multivariate outcomes, our implementation has approximately quadratic time complexity in the number of units, and linear time complexity in the number of periods.
Finally, our MCEM estimator combines the benefits of a purely Bayesian (estimated e.g. via MCMC) and a purely frequentist (estimated e.g. via MLE or GMM) approach. Specifically, relying on Monte Carlo sampling during the $E$ step of the algorithm enables us to seamlessly impute missing values in the outcome in a fully Bayesian manner. This also permits generating out-of-sample predictions during model training. Simultaneously, the $M$ step of the algorithm ensures that our estimator converges much more quickly -- and with less need for tuning -- than a fully Bayesian approach.

This paper adds to the quickly growing literature that derives scalable estimators for modeling spatial dependence in non-Gaussian outcomes, specifically \cite{wang2013partial}, \cite{pace2016fast}, and \cite{liesenfeld2016likelihood}. Apart from introducing a new estimation strategy, this paper supplements these previous efforts in at least three aspects. First, we provide a model that is suitable not only for spatio-temporal data, but also for multivariate outcomes. Second, our approach provides an efficient and theoretically satisfying solution to missing outcome data and out-of-sample prediction tasks. And third, we provide an efficient, tested 'black-box' implementation of the proposed estimator in the R programming language, making it easily accessible for applied researchers.\footnote{See \url{https://github.com/hunzikp/dimstar}.}

The paper proceeds by providing an overview of the related literature. In Section \ref{sec:model}, we then introduce the family of models we are interested in, followed by a detailed discussion of our MCEM estimation strategy in Section~\ref{sec:estim}. Next, Section~\ref{sec:sim} evaluates our estimator on simulated data, showing favorable performance in terms of finite sample properties and scalability. Section~\ref{sec:application} illustrates the methodology on observed data, presenting an analysis of weekly IS-related incidences of political violence in Syrian districts. 

\section{Related Literature}

The literature on spatial autoregressive models for non-Gaussian outcomes is largely divided along outcome types. 
For one, a considerable literature discusses SAR models for binary data, in particular cross-sectional Spatial Probit models, where the spatial autoregressive process operates on a latent Gaussian variable and is mapped onto the binary outcome via the typical Probit-style indicator function (see \citealt{lesage2009introduction} for a textbook treatment).
Perhaps the best-known estimator for the Spatial Probit model is introduced by \cite{lesage2000bayesian}, who derives a Metropolis-Hastings-within-Gibbs sampling approach for estimating a fully Bayesian variant of the model for cross-sectional data. 
Taking an alternative approach, \cite{beron2004probit} propose a maximum simulated likelihood estimation strategy where the likelihood is approximated via recursive importance sampling (RIS). \cite{franzese2016spatial} generalize the RIS approach to settings with spatial and temporal dependence. 
Yet another estimation strategy is pursued by \cite{pinkse1998contracting}, who derive a GMM estimator for the Spatial Probit model. This approach is extended by \cite{klier2008clustering}, who introduce a scalable but inconsistent linearized version of \citeauthor{pinkse1998contracting}'s \citeyear{pinkse1998contracting} estimator.\footnote{More precisely, the Klier and McMillen linearized estimator produces inconsistent estimates for the spatial autoregressive parameter if absolute spatial dependence is high; see also the results reported by \cite{calabrese2014estimators}.} 

Finally, an early approach that deserves special attention in the context of this paper is the estimator proposed by \cite{mcmillen1992probit}. Similar to the present paper, McMillen considers the EM algorithm to perform maximum likelihood estimation, though he limits his focus to univariate cross-sectional binary outcome data. Within these constraints, he derives a \emph{closed-form} EM estimator, that is, unlike the approach put forward in this paper, there is no need for Monte Carlo simulation. However, this comes at the cost of consistency. Specifically, McMillen makes a number of simplifying assumptions such that the final EM estimator does not maximize a valid likelihood. As a result, the McMillen estimator produces biased estimates, as shown in the simulation experiments of \cite{calabrese2014estimators}.

Because the estimation approaches discussed so far do not generally scale well to large datasets, recent efforts in the literature have focused on deriving cross-sectional Spatial Probit estimators with low computational overhead. \cite{wang2013partial} propose a partial maximum likelihood estimator that simplifies the estimation problem by grouping observations into pairs. Similarly, \cite{pace2016fast} improve scalability by proposing a variant of the RIS approach by \cite{beron2004probit} that leverages sparse matrix methods. 

Another class of non-Gaussian spatial autoregressive models having received attention in the literature are SAR models for count data.
In an excellent review, \cite{glaser2017review}, following the time-series terminology of \cite{cox1981statistical}, divides SAR count data models into two types: 'Observation-driven' models where spatial autocorrelation operates on the observed count outcome, and 'parameter-driven' models, where spatial autocorrelation operates on a latent variable. The estimator introduced in the present paper falls into the latter category.

\cite{lambert2010two} introduce an observation-driven model for cross-sectional data where the spatial autoregressive process operates on the log-conditional-mean of a Poisson outcome. They propose an estimation strategy based on two-step limited information maximum likelihood. Conceptually similar models are introduced by \cite{hays2009comparison} and \cite{bhati2008generalized}. Apart from their focus on cross-sectional data, these models only account for spatial dependence induced by the predictors, but not by the error terms \citep[p.9]{glaser2017review}.

Among parameter-driven SAR count models, two deserve special attention.\footnote{See \cite{glaser2017review} for a more complete overview of the literature.} The first is the cross-sectional model considered by \cite{bhat2014spatial}, which, similar to the family of models considered in this paper, assumes a latent-Gaussian structure. Moreover, \cite{bhat2014spatial} also explicitly consider multivariate outcomes. However, in contrast to the approach pursued in the present paper, their formulation restricts outcome dependence to the error terms, whereas we allow for fully interdependent outcomes. 

The second SAR count model worth highlighting is the one considered by \cite{liesenfeld2016likelihood, liesenfeld2017likelihood}, who also focus on  the latent-Gaussian setup. Moreover, similar to the present paper, \cite{liesenfeld2016likelihood} do not focus exclusively on a single outcome type, but generalize their analysis to a wide range of outcome distributions, discussing count and binary outcomes specifically. Their key contribution is a new maximum simulated likelihood estimation procedure that extends the efficient importance sampling (EIS) algorithm by \cite{richard2007efficient} in a manner that makes it applicable to large-scale spatial models. Moreover, in \cite{liesenfeld2017likelihood}, the authors generalize the count outcome version of their model to a panel setup, and consider temporal and spatial dependence. 

In sum, despite being a very active strand of research, the literature on SAR type models for non-Gaussian outcome data has only recently been able to catch up with the explosive growth in spatio-temporal data and the modeling challenges that accompany it. In particular, applied researchers currently have few choices for modeling spatio-temporal dependence in large, possibly multivariate non-Gaussian response data. Most of the existing literature only considers univariate and/or cross-sectional data, and pays limited attention to scalability. Perhaps more importantly, many of the estimation techniques proposed in the literature lack an easily accessible software implementation, thus further raising the bar for practitioners to employ these methods. 

This paper addresses these shortcomings head-on. By introducing a flexible and readily implemented MCEM algorithm for estimating spatio-temporal dependence in possibly multivariate outcome data, it complements and extends the recent efforts by \cite{pace2016fast} and \cite{liesenfeld2016likelihood}.

\section{The latent-Gaussian multivariate spatio-temporal autoregressive model}
\label{sec:model}

We propose a latent-Gaussian formulation for modeling multivariate spatio-temporal data over $j = 1, \ldots, G$ outcomes, $i = 1, \ldots, N$ units, and $t = 1, \ldots, T$ periods. Let $Y_{j,t}$ be a $N \times 1$ column vector representing the $j$th outcome at period $t$. Then the class of models we consider is summarized by the following assumptions:

\begin{align*}
    p(Y_{j,t}) & = \prod_i^N p(Y_{j,i,t} | z_{j,i,t}) \\
    z_{j,t} & = \rho_j W z_{j,t} + \gamma_j z_{j,t-1} + \sum_{k \neq j}^G \lambda_{k,j} z_{k,t} + \epsilon_{j,t} \quad \forall \ t > 1 \\
    z_{j,t} & = \rho_j W z_{j,t} + \sum_{k \neq j}^G \lambda_{k,j}  z_{k,t} + \epsilon_{j,t}  \quad \text{if} \ t = 1 \\
    \epsilon_{j,t} & \sim N(0, I \sigma^2_j),
\end{align*}

whereas $W$ is a row-standardized $N \times N$ spatial weights matrix, and $\rho_j$, $\gamma_j$, and $\lambda_{k,j}$ are parameters, each constrained to the $(-1, 1)$ interval, capturing spatial-, temporal-, and outcome-dependence, respectively. To ensure identification, we also assume that $\lambda_{k,j} = \lambda_{j,k}$. Hence, in a system with $G$ outcomes we estimate ${G\choose 2}$ $\lambda$-parameters. One may also add a (outcome-specific) matrix of predictors to to the above model ($X_j \beta_j$); we omit the term for ease of exposition.

While the above-noted model specification is intuitive, it is often more practical to work with a formulation that omits outcome- and period-wise subscripts, and is specified in reduced-form. To this end, we concatenate the latent-Gaussian vectors $z_{j,t}$ first across outcomes, then across periods, yielding the following formulation:

\begin{align}
    z & = A^{-1} \epsilon \label{eq:full_model} \\
    A & = I - Q \nonumber \\
    \epsilon & \sim MVN(0, A^{-1} \Sigma A'^{-1}). \nonumber
\end{align}

$Q$ is a $NGT \times NGT$ interdependence matrix taking the form

\begin{equation*}
Q = 
\begin{pmatrix}
  Q^{*} & 0 & 0 & \cdots & 0 \\
  L^{*} & Q^{*} & 0 & \cdots & 0 \\
  0 & L^{*} & Q^{*} & \cdots & 0 \\
  \vdots  & \vdots & \ddots  & \ddots & \vdots  \\
  0 & 0 & \cdots & L^{*} & Q^{*}
\end{pmatrix},
\end{equation*}

whereas

\begin{equation*}
Q^{*}_{NG \times NG} = 
\begin{pmatrix}
  \rho_1 W & I \lambda_{1,2}  & \cdots & I \lambda_{1,G}  \\
  I \lambda_{1,2}  & \rho_2 W & \cdots & I \lambda_{2,G}  \\
  \vdots  & \vdots & \ddots  & \vdots  \\
  I \lambda_{1,G}  & I\lambda_{2,G}  & \cdots & \rho_G W
\end{pmatrix},
\end{equation*}

and

\begin{equation*}
L^{*}_{NG \times NG} = 
\begin{pmatrix}
  I \gamma_1 & 0  & \cdots & 0  \\
  0  & I \gamma_2 & \cdots & I 0  \\
  \vdots  & \vdots & \ddots  & \vdots  \\
  0  & 0  & \cdots & I \gamma_G
\end{pmatrix}.
\end{equation*}

$MVN$ denotes the multivariate normal distribution, and  

\begin{align*}
\Sigma = 
\begin{pmatrix}
  \Sigma^{*} & 0 & 0 & \cdots & 0 \\
  0 & \Sigma^{*} & 0 & \cdots & 0 \\
  0 & 0 &\Sigma^{*} & \cdots & 0 \\
  \vdots  & \vdots & \ddots  & \ddots & \vdots  \\
  0 & 0 & \cdots & 0 & \Sigma^{*}
\end{pmatrix},
\end{align*}

with 

\begin{equation*}
    \Sigma^{*}_{NG \times NG} = 
\begin{pmatrix}
  I_N \sigma^2_1 & 0 & \cdots & 0 \\
  0 & I_N \sigma^2_2 & \cdots & 0 \\
  \vdots  & \vdots & \ddots  & \vdots  \\
  0 & 0 & \cdots & I_N \sigma^2_G
\end{pmatrix}.
\end{equation*}

$I_N$ denotes the $N \times N$ identity matrix.
Note that if $G=1$, the proposed model simplifies to the regular spatio-temporal autoregressive model (STAR) defined on the latent-Gaussian outcome $z$. The proposed model is stationary if $max_i |\sum_j Q_{i,j}| < 1$, which we ensure by requiring

\begin{equation*}
    \left |\sum_{(j,k)} \lambda_{j,k} + \rho_j + \gamma_j \right| < 1 \quad \forall j,
\end{equation*}

whereas $\sum_{(j,k)}$ represents the sum over all possible pairs of outcomes involving outcome $j$.

We link the latent-Gaussian outcome $z$ to the observed outcome $Y$ by specifying an appropriate distribution $p(Y_{j,i,t} | z_{j,i,t})$. If the observed outcome is binary we propose the deterministic distribution function

\begin{equation*}
    p(Y_{j,i,t} = 1 | z_{j,i,t}) = 
    \begin{cases}
        1 ,& \text{if } z_{j,i,t} \geq 0 \\
        0,              & \text{otherwise},
    \end{cases}
\end{equation*}

which, together with the identifying assumption $\sigma^2_j = \sigma^2 = 1$, amounts to a multivariate spatio-temporal Probit specification.

If the observed outcome is a count variable, we propose using a multivariate Poisson log-Normal setup, i.e.

\begin{equation*}
    p(Y_{j,i,t} | z_{j,i,t}) = \text{dPois}(Y_{j,i,t}; \ exp(z_{j,i,t}))
\end{equation*}

whereas $\text{dPois}$ is the Poisson PMF.

\section{Estimation}
\label{sec:estim}

Let $\theta$ denote a vector of all parameters ($\theta = {\rho_j, \gamma_j, \lambda_{k,j}}, \sigma^2_j, \ldots$). The full-data log-likelihood of the model is then given by

\begin{align}
    ln \ L(\theta | Y, z) & = ln \ p(Y, z| \theta) \nonumber \\
                         & = ln \ p(Y | z) + ln \ p (z | \theta) \nonumber \\
                         & = \sum_{i,j,t} ln \ p(Y_{j,i,t} | z_{j,i,t}) + ln \ p (z | \theta) 
                         \label{eq:fulldata}
\end{align}

The log probability of the latent-Gaussian variable $z$, in turn, is induced by the model specification in \eqref{eq:full_model}, and is given by

\begin{equation}
    ln \ p (z | \theta) = ln \ \text{dMVN}(z; \ 0, \ A^{-1} \Sigma A'^{-1}), \label{eq:zdist}
\end{equation}

whereas  $\text{dMVN}$ denotes the multivariate normal density.

\subsection*{MCEM algorithm}

The Expectation Maximization (EM; \citealt{dempster1977maximum}) formalism estimates $\theta$ by maximizing the intractable marginal likelihood

\begin{equation*}
    L(\theta | Y) = \int_{-\infty}^{\infty} p(Y, z| \theta) \ d z
\end{equation*}

via coordinate ascent, giving the following recursive algorithm:

\begin{enumerate}
    \item Determine the expected value of the full-data log-likelihood with respect to the latent variable $z$ and conditional on the current parameters $\theta^{m}$. This expectation is typically called the $Q$ function:
    
    \begin{align*}
        Q(\theta | \theta^{m}) & = E_{z | \theta^{m}, Y} \left[ ln \ p(Y, z| \theta) \right]
    \end{align*}
    
    \item Obtain the updated set of parameters by maximizing the $Q$ function:
    
    \begin{equation*}
        \theta^{m+1} = \text{argmax}_{\theta}  \ Q(\theta | \theta^{m})
    \end{equation*}
    
    \item Repeat until convergence.
\end{enumerate}

In the following, we derive the $Q$ function up to a constant. For ease of exposition, we drop all subscripts from the expectation operator.

\begin{align*}
     Q(\theta | \theta^{m}) & = E \left[ ln \ p(Y, z| \theta) \right] \\
                            & = E \left[ ln \ p(Y | z) \right] + E \left[ ln \ p(z | \theta) \right]  \\
                            & \propto E \left[ln \ p(z | \theta) \right] \\
                            & \propto ln |A| - ln {|\Sigma|}^{\frac{1}{2}} - \frac{1}{2} E \left[ (Az)' \Sigma^{-1} (Az) \right]
\end{align*}

Evaluating the expected kernel $E \left[ (Az)' \Sigma^{-1} (Az) \right]$ requires the first two moments of $z | \theta^{m}, Y$, which are generally not available in closed form. For this reason, we pursue an MCEM approach where we obtain an estimate of the kernel via sampling. One way to do this would be to (i) sample from $p(z | \theta^{m}, Y)$, (ii) use these samples to estimate the first two moments of $z | \theta^{m}, Y$ and then (iii) plugging these estimates into the $Q$ function. However, this approach requires estimating the $NGT \times NGT$ covariance matrix $V \left[z |  \theta^{m}, Y \right]$, which involves considerable computational overhead. Instead, we invoke the law of the unconscious statistician and estimate the kernel directly via the numerical integral

\begin{equation*}
    E \left[ (Az)' \Sigma^{-1} (Az) \right] \approx \frac{1}{S} \sum_{s=1}^S \left[ (Az^{(s)})' \Sigma^{-1} (Az^{(s)}) \right].
\end{equation*}

\subsection*{MC sampling}

The MCEM algorithm requires sampling from 

\begin{align*}
    p(z | \theta^{m}, Y) & \propto p(Y | z) \ p(z | \theta^{m}) \\
                         & \propto \prod_{i,j,t} p(Y_{i,j,t} | z_{i,j,t}) \ p(z | \theta^{m}).
\end{align*}

Sampling from this distribution directly is typically either unfeasible (e.g. in the count-data case) or prohibitively costly (e.g. in the binary case), which is why we propose using a Gibbs sampler to sample from

\begin{align}
    p(z_{l} | \theta^{m}, Y, z_{-l}) & \propto p(Y_{l} | z_{l}) \ p(z_{l} | \theta^{m}, z_{-l}) \label{eq:gibbs},
\end{align}

whereas subscript $l$ enumerates a single data point, and $z_{-l}$ denotes the entire $z$ vector except for $l$. Note that the second term in expression \eqref{eq:gibbs} is a conditional multivariate normal distribution, and thus itself normal:

\begin{align*}
    p(z_{l} | \theta^{m}, z_{-l}) & = \text{dN}(z_{l}; \ \bar{\mu}, \bar{\Sigma}) \\ 
    \bar{\mu} & = H_{ii}^{-1} H_{l, -l} z_l  \\ 
    \bar{\Sigma} & = H_{ii}^{-1},
\end{align*}

where $H = A' \Sigma^{-1} A$, i.e. $H$ is the precision matrix of $z$, as established in \eqref{eq:zdist}. Note that a fortunate implication of these derivations is that we do \emph{not} have to invert $A$ to evaluate $p(z_l | \cdot)$.  We \emph{do} have to invert $\Sigma$, but because $\Sigma$ is diagonal, this may be performed in linear time. Similarly, $H_{ii}$ is scalar, so obtaining its inverse is straightforward as well.

If the outcome is binary, sampling from the conditional distribution \eqref{eq:gibbs} thus amounts to sampling from the univariate truncated normal distribution,

\begin{equation*}
    p(z_{l} | \theta^{m}, Y, z_{-l}) = 
    \begin{cases}
        \text{dTN}_{0, \infty}(z_l; \ \bar{\mu}, \bar{\Sigma}) ,& \text{if } Y_l = 1 \\
        \text{dTN}_{-\infty, 0}(z_l; \ \bar{\mu}, \bar{\Sigma}),              & \text{otherwise},
    \end{cases}
\end{equation*}

where $\text{dTN}_{a, b}$ denotes the truncated Normal density with support in the interval $(a, b]$. 

If the outcome is a count-variable, sampling from the conditional distribution \eqref{eq:gibbs} requires sampling from the unnormalized Poisson log-Normal posterior
\begin{align*}
    p(z_{l} | \theta^{m}, Y, z_{-l}) & \propto p(Y_{l} | z_{l}) \ \text{dN}(z_{l}; \ \bar{\mu}, \bar{\Sigma}) \\ 
                                     & \propto \text{dPois}(Y_l; \ exp(z_l)) \ \text{dN}(z_{l}; \ \bar{\mu}, \bar{\Sigma}).
\end{align*}

To do so, we employ the adaptive rejection sampler by \cite{wild1993algorithm}, which allows highly efficient sapling from unnormalized densities as long as (i) the first derivative with respect to the sampling variable is available, and (ii) the density is log-concave. For the present application, it is trivial to show that these requirements are met. The logarithm of the Poisson log-Normal posterior is proportional to

\begin{align*}
    ln \ p(z_{l} | \theta^{m}, Y, z_{-l}) & \propto Y_l z_l - exp(z_l) - 0.5\bar{\Sigma}^{-1}(z_l - \bar{\mu})^2
\end{align*}

with first and second derivatives

\begin{align*}
    \frac{\partial ln \ p(z_{l} | \cdot)}{\partial z_l} & = Y_l - exp(z_l) - \bar{\Sigma}^{-1}(z_l - \bar{\mu}) \\ 
    \frac{\partial ln \ p(z_{l} | \cdot)}{\partial^2 z_l} & = -exp(z_l) - \bar{\Sigma}^{-1} < 0.
\end{align*}

Finally, note that using a Gibbs sampler to sample the latent variable $z$ provides a natural strategy for imputing missing data. Specifically, if $Y_{i,t,l} = Y_l$ is missing, we simply draw $z_l$ from its conditional prior,

\begin{align*}
    p(z_{l} | \theta^{m}, z_{-l}, Y, Y_l = \text{missing}) & = p(z_{l} | \ \theta^{m}, z_{-l}) \\ 
    & = \text{dN}(z_{l}; \ \bar{\mu}, \ \bar{\Sigma}),
\end{align*}

during the $E$ step of the MCEM algorithm. This yields a complete set of $z$ samples, allowing us to proceed with the $M$ step without further modifications. This approach is particularly useful for generating out-of-sample predictions or forecasts. We simply add the observations to be predicted to the dataset on which the model is being trained, and set corresponding outcome observations to missing. The MCEM algorithm will then produce corresponding $z$ samples from which predictions can be obtained. 

\subsection*{Efficient evaluation of the Q function}

A central part of the MCEM algorithm is the maximization of the $Q$ function, given by

\begin{equation*}
     Q(\theta | \theta^{m}) \propto ln |A| - ln {|\Sigma|}^{\frac{1}{2}} - 0.5 \underbrace{\frac{1}{S} \sum_{s=1}^S \left[ (Az^{(s)})' \Sigma^{-1} (Az^{(s)}) \right]}_\text{approx. expected kernel}
\end{equation*}

Because the $Q$ function is evaluated repeatedly during optimization, scalability requires that $Q$ can be evaluated efficiently even for very large datasets. In the following, we briefly discuss our computational setup for evaluating the three constituent terms of the $Q$ function as efficiently as possible.

\paragraph{Expected kernel}
We ensure efficient evaluation of the approximate expected kernel by storing $A$ as a sparse matrix, and using sparse matrix multiplication for evaluating each kernel sample. This yields a complexity that is linear in $nnz \times S$, whereas $nnz$ is the number of non-zero entries in $A$.\footnote{Because $\Sigma^{-1}$ is diagonal, the matrix product $v' \Sigma^{-1} v$ with $v_{NGT \times 1}$ has complexity $\mathcal{O}(N*G*T)$.} If $W$ is such that each unit has $K$ spatial neighbors, then 

\begin{equation*}
    nnz = \underbrace{T*N*G}_{\text{diag.}} + \underbrace{(T-1)*N*G}_{\text{temp. lags}} + \underbrace{T*G*N*K}_{\text{spat. lags}} + \underbrace{T*G*(G-1)*N}_{\text{outcome lags}}
\end{equation*}

Hence, assuming $K$ constant as $N$ increases (which is typically the case in lattice data), then the complexity of evaluating the approximate expected kernel is $\mathcal{O}(S*T*G^2*N)$. In other words, it is linear in the number of units, time-periods, and MC samples, and quadratic in the number of outcomes.

\paragraph{Log-determinant of error variance} Recall that the error covariance matrix, $\Sigma$, is assumed diagonal. It follows that 

\begin{equation*}
ln {|\Sigma|}^{\frac{1}{2}} = \sum_{l}^{NGT} ln \Sigma_{ll}^\frac{1}{2},
\end{equation*}

which has complexity $\mathcal{O}(N*G*T)$.

\paragraph{Log-determinant of $A$} The biggest impediment towards an efficient implementation of the $Q$ function is the computation of the log-determinant of the spatio-temporal multiplier, $ln |A|$. Standard algorithms for computing the determinant of arbitrary martrices have cubic complexity, which is clearly prohibitive when $N*G*T$ is large. There is an extensive literature discussing this issue for the simple spatial autoregressive case where $A = I - \rho W$, see e.g. \cite{bivand2013computing}. In our case, however, $A$ also contains temporal and outcome-specific lags, making many of the standard approaches for evaluating the log-determinant -- e.g. \citeauthor{ord75estimation}'s (\citeyear{ord75estimation}) method -- impractical.

We address this issue by showing that, thanks to the particular structure of $A$, it is actually not necessary to compute the log-determinant of $A$ directly. More precisely, in the following, we introduce what we term the \emph{STAR Log-Determinant} theorem, which establishes that computing the log-determinant of the $NGT \times NGT$ matrix $A$ only requires numerical evaluation of the determinant of a smaller matrix of size $NG \times NG$.

We start by stating the following results:

\begin{lemma}
The inverse of a block-diagonal matrix is the block-diagonal matrix of block inverses. E.g.
\begin{equation*}
    B^{-1} = 
\begin{pmatrix}
  B_1 & 0 \\
  0 & B_2 \\
\end{pmatrix}^{-1}
= 
\begin{pmatrix}
  B_1^{-1} & 0 \\
  0 & B_2^{-1} \\
\end{pmatrix}
\end{equation*}
\label{lem:block}
\end{lemma}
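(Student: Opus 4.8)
The plan is to prove this by direct verification: I would exhibit the candidate block-diagonal matrix of block inverses and check that it acts as a two-sided inverse of $B$ under block multiplication. Concretely, I would first assume that $B_1$ and $B_2$ are each invertible (this is implicit in writing $B_1^{-1}$ and $B_2^{-1}$, and the zero off-diagonal blocks are understood to have conformable dimensions). I would then compute the product
$$
\begin{pmatrix} B_1 & 0 \\ 0 & B_2 \end{pmatrix}
\begin{pmatrix} B_1^{-1} & 0 \\ 0 & B_2^{-1} \end{pmatrix}
$$
using the standard rule for multiplying partitioned matrices.

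The key observation is that block multiplication of conformable block-diagonal matrices reduces to multiplying corresponding diagonal blocks, because every product term routed through an off-diagonal position is annihilated by a zero block. Explicitly, the upper-left block of the product is $B_1 B_1^{-1} = I$, the lower-right block is $B_2 B_2^{-1} = I$, and the two off-diagonal blocks vanish, since $B_1 \cdot 0 + 0 \cdot B_2^{-1} = 0$ and $0 \cdot B_1^{-1} + B_2 \cdot 0 = 0$. Hence the product equals the identity. Because $B$ is square, a one-sided inverse is automatically two-sided, so this single computation suffices; alternatively, multiplying in the opposite order yields the identity by the identical argument. To obtain the statement for a block-diagonal matrix with arbitrarily many diagonal blocks -- which is the form actually invoked later for $\Sigma$ and $Q^{*}$ -- I would either repeat the same block computation or argue by a trivial induction on the number of blocks.

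I expect no genuine obstacle: the result is a routine consequence of the definition of the matrix inverse together with the rules of block multiplication. The only points warranting care are stating the invertibility of the individual diagonal blocks as a hypothesis and confirming that the zero off-diagonal blocks carry the correct dimensions for the products to be conformable.
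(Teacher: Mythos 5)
Your proof is correct: the direct verification via block multiplication, together with the explicit invertibility hypothesis on the diagonal blocks and the induction for arbitrarily many blocks, is the canonical argument. The paper states this lemma as a standard fact and omits any proof, so your writeup supplies exactly the routine justification the authors took for granted.
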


\begin{lemma}
The left or right matrix product of a block-diagonal matrix $B$ with a matrix $L$ that is lower-block-diagonal with respect to $B$ is itself lower-block-triangular with respect to $B$. E.g. let  
\begin{equation*}
    B = \begin{pmatrix}
  B_1 & 0 & 0 \\
  0 & B_2 & 0 \\
  0 & 0 & B_3
\end{pmatrix}, \quad
L = \begin{pmatrix}
  0 & 0 & 0 \\
  L_1 & 0 & 0 \\
  L_2 & L_3 & 0
\end{pmatrix}.
\end{equation*}
Then 
\begin{equation*}
    BL = \begin{pmatrix}
  0 & 0 & 0 \\
  \neq 0 & 0 & 0 \\
  \neq 0 & \neq 0 & 0
\end{pmatrix},
\end{equation*} 
and equivalently for $LB$.
\label{lem:tria}
\end{lemma}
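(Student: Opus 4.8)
The plan is to verify the claim by direct block-wise multiplication, exploiting the fact that a block-diagonal matrix has exactly one nonzero block in each block-row and block-column, which collapses the usual sum over intermediate blocks into a single term. First I would fix notation: write $B$ with blocks $B_{ab}$, where block-diagonality means $B_{ab} = 0$ whenever $a \neq b$, and write $L$ with blocks $L_{ab}$, where lower-block-triangularity (as drawn in the example) means $L_{ab} = 0$ whenever $a < b$. I assume throughout that the block partitions of $B$ and $L$ are compatible, so that every block product below is conformable.

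For the left product $BL$, I would compute its $(a,c)$ block as $(BL)_{ac} = \sum_b B_{ab} L_{bc}$. Since $B_{ab}$ vanishes unless $b = a$, this sum reduces to the single term $(BL)_{ac} = B_{aa} L_{ac}$. Hence whenever $a < c$ we have $L_{ac} = 0$, so $(BL)_{ac} = 0$, which is precisely the statement that $BL$ is lower-block-triangular. Running the same computation from the other side gives $(LB)_{ac} = \sum_b L_{ab} B_{bc} = L_{ac} B_{cc}$, which again vanishes for $a < c$, establishing the claim for the right product $LB$.

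Finally, I would note that the same bookkeeping tracks the diagonal blocks: the only surviving diagonal term is $(BL)_{aa} = B_{aa} L_{aa}$ (and symmetrically $(LB)_{aa} = L_{aa} B_{aa}$), so a \emph{strictly} lower-block-triangular $L$ --- one with $L_{aa} = 0$, as in the displayed example --- produces a product whose block diagonal is also zero, matching the zero diagonal shown for $BL$. There is no genuine obstacle in this argument; it is purely structural and does not depend on the specific entries of the blocks $Q^{*}$ or $L^{*}$. The only point demanding care is the indexing convention together with the implicit conformability of the partitions, which guarantees that each reduced product $B_{aa} L_{ac}$ is well-defined.
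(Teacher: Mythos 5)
Your proof is correct. The paper in fact states this lemma without any proof (it is invoked as a standard fact in the proof of the STAR Log-Determinant theorem), and your direct block-wise computation --- collapsing $\sum_b B_{ab}L_{bc}$ to $B_{aa}L_{ac}$ and $\sum_b L_{ab}B_{bc}$ to $L_{ac}B_{cc}$ --- is exactly the argument the paper implicitly relies on. Two small points in your favor: your closing remark that a \emph{strictly} lower-block-triangular $L$ (zero diagonal blocks) yields a product with zero diagonal blocks is precisely the property the theorem's proof needs, since it guarantees $I - (I-D)^{-1}L$ has a unit diagonal; and your phrasing of the conclusion as ``the above-diagonal blocks vanish'' is actually more accurate than the paper's display, whose ``$\neq 0$'' entries need not literally be nonzero --- the genuine content of the lemma is only the vanishing pattern, which is what you prove.
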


\begin{lemma}
From Sylvester's identity \citep[see e.g.][]{akritas1996various} it follows that if $A$ is an invertible $m \times m$ matrix and $B$ is a $m \times m$ matrix, then
\begin{equation*}
    |A + B| = |A| * |I + A^{-1}B|.
\end{equation*}
\label{lem:sylv}
\end{lemma}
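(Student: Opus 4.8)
The plan is to establish the identity directly from the multiplicativity of the determinant, rather than invoking the full combinatorial form of Sylvester's determinant theorem. First I would use the invertibility of $A$ to factor the sum $A + B$ on the left:
\[
A + B = A I + A A^{-1} B = A(I + A^{-1}B).
\]
This step is purely algebraic, and the hypothesis that $A$ is invertible enters only to guarantee that $A^{-1}$, and hence $A^{-1}B$, is well defined.

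Next I would take determinants of both sides and apply the fact that $|MN| = |M|\,|N|$ for square matrices of equal dimension. This yields
\[
|A + B| = \bigl| A(I + A^{-1}B) \bigr| = |A|\,\bigl| I + A^{-1}B \bigr|,
\]
which is exactly the claimed identity. Since both $A$ and $I + A^{-1}B$ are $m \times m$, the product rule for determinants applies without qualification.

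There is no real obstacle here: the result reduces to the single factorization above together with the standard multiplicativity of the determinant, and the only hypothesis that does any work is the invertibility of $A$. The one thing worth recognizing is that the statement as used in the paper is weaker than, and an immediate consequence of, this elementary factorization, so no appeal to the rectangular version $|I_m + MN| = |I_n + NM|$ is needed; the invertibility of $A$ is precisely what lets the single step go through.
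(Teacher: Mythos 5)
Your proof is correct, and it is in fact more self-contained than what the paper offers: the paper gives no argument at all for this lemma, simply attributing the identity to Sylvester's determinant identity via the citation to Akritas et al. Your route --- factoring $A + B = A(I + A^{-1}B)$ and applying multiplicativity of the determinant, $|MN| = |M|\,|N|$ --- proves the statement in one line using nothing beyond the product rule, whereas the paper's attribution implicitly leans on Sylvester's theorem $|I_m + MN| = |I_n + NM|$, a strictly stronger statement about rectangular factors that is itself usually proved by a block-matrix or continuity argument. What your approach buys is transparency about which hypothesis does the work (invertibility of $A$, and nothing else) and independence from any external result; what the paper's citation buys is only brevity, and arguably at the cost of over-attribution, since for square matrices with $A$ invertible the identity is exactly your factorization. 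Your version is also fully adequate for how the lemma is deployed later in the paper: in the proof of the STAR Log-Determinant theorem it is applied with $A = I - D$ (invertible for admissible parameter values) and $B = -L$, both square, so the rectangular generality of Sylvester's theorem is never needed.
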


We now posit the following:

\begin{theorem}[STAR Log-Determinant]
Let $A$ be the $NGT \times NGT$ inverse spatio-temporal multiplier as defined in \eqref{eq:full_model}. We may decompose $A$ as follows, 
\begin{equation*}
    A = I - L - D
\end{equation*}
with 
\begin{equation*}
L = 
\begin{pmatrix}
 0 & 0 & 0 & \cdots & 0 \\
  L^{*} & 0 & 0 & \cdots & 0 \\
  0 & L^{*} & 0 & \cdots & 0 \\
  \vdots  & \vdots & \ddots  & \ddots & \vdots  \\
  0 & 0 & \cdots & L^{*} & 0
\end{pmatrix}, \quad
D = 
\begin{pmatrix}
  Q^{*} & 0 & 0 & \cdots & 0 \\
  0 & Q^{*} & 0 & \cdots & 0 \\
  0 & 0 & Q^{*} & \cdots & 0 \\
  \vdots  & \vdots & \ddots  & \ddots & \vdots  \\
  0 & 0 & \cdots & 0 & Q^{*}
\end{pmatrix},
\end{equation*}
with $L^{*}$ and $Q^{*}$ both of size $NG \times NG$. 
Then the log-determinant of $A$ is given by
\begin{equation*}
    ln |A| = T * ln |I - Q^{*}|.
\end{equation*}
\end{theorem}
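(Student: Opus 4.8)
The plan is to exploit the additive decomposition $A = (I - D) - L$ and to apply Sylvester's identity (Lemma~\ref{lem:sylv}) so as to peel off the block-diagonal part $I - D$ from the strictly lower-block-triangular correction $-L$. Concretely, I would take the invertible matrix in Lemma~\ref{lem:sylv} to be $I - D$ and the perturbation to be $-L$, yielding
\[
    |A| = |I - D| \cdot |I - (I-D)^{-1} L|.
\]
Before invoking the identity I would verify that $I - D$ is indeed invertible: since $D$ is block-diagonal with $T$ copies of $Q^{*}$, the matrix $I - D$ is block-diagonal with $T$ copies of $I - Q^{*}$, and this is nonsingular precisely when $I - Q^{*}$ is, which is guaranteed by the stationarity restriction imposed on the parameters.

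Next I would evaluate the two factors separately. The first factor is immediate: because $I - D$ is block-diagonal with identical diagonal blocks $I - Q^{*}$, its determinant is $|I - Q^{*}|^{T}$, which is exactly the term we want to survive. The entire burden therefore shifts onto showing that the second factor equals one. Here I would chain Lemmas~\ref{lem:block} and~\ref{lem:tria}: Lemma~\ref{lem:block} gives that $(I-D)^{-1}$ is again block-diagonal (with blocks $(I-Q^{*})^{-1}$), and Lemma~\ref{lem:tria} then guarantees that the product $(I-D)^{-1} L$ of a block-diagonal matrix with the strictly lower-block-triangular matrix $L$ is itself strictly lower-block-triangular.

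Consequently $I - (I-D)^{-1} L$ is lower-block-triangular with every diagonal block equal to $I_{NG}$. Its determinant is the product of the determinants of its diagonal blocks — a standard fact for (block-)triangular matrices — i.e. a product of $T$ copies of $|I_{NG}| = 1$, hence equal to one. Combining the two factors gives $|A| = |I - Q^{*}|^{T}$, and taking logarithms yields the claimed $\ln|A| = T \ln|I - Q^{*}|$.

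I expect the main obstacle to be conceptual rather than computational: the crux is recognizing that the temporal coupling of the system, carried entirely by $L$, contributes nothing to the determinant, and then organizing this insight as a clean ``block-diagonal part times correction-equal-to-one'' factorization through Sylvester's identity. A secondary point requiring care is justifying the product-of-diagonal-blocks formula for the block-triangular matrices appearing above. Indeed, one could sidestep Sylvester's identity altogether by simply observing that $A$ is itself block-lower-(bi)triangular and reading off $|A| = |I - Q^{*}|^{T}$ directly; the Sylvester route, however, keeps the argument modular and makes explicit use of the three stated lemmas.
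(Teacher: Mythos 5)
Your proposal is correct and follows essentially the same route as the paper's own proof: apply Sylvester's identity (Lemma~\ref{lem:sylv}) with $I-D$ as the invertible part and $-L$ as the perturbation, use Lemmas~\ref{lem:block} and~\ref{lem:tria} to conclude that $(I-D)^{-1}L$ is strictly lower-block-triangular so that $|I-(I-D)^{-1}L|=1$, and then read off $|I-D| = |I-Q^{*}|^{T}$ from block-diagonality. Your two additions---explicitly verifying that $I-D$ is invertible (which the paper leaves implicit) and noting that one could bypass Sylvester entirely since $A$ is itself block-lower-triangular---are sound refinements rather than a different argument.
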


\begin{proof}
Invoking Lemma \ref{lem:sylv} we can write
\begin{equation*}
    |A| = |I - D - L| = |I-D| * |I - (I -D)^{-1}L|.
\end{equation*}
From Lemmas \ref{lem:block} and \ref{lem:tria} we can derive that $(I -D)^{-1}L$ is lower-block-diagonal with respect to $I-D$. Consequently, $|I - (I -D)^{-1}L|$ is lower-triangular with a unit diagonal. Because the determinant of triangular matrices is the product of all diagonal entries, it follows that
\begin{equation*}
    |A| = |I - D - L| = |I-D|.
\end{equation*}
$|I-D|$ is block-diagonal with $T$ blocks of $I - Q^{*}$. Since the determinant of a block-diagonal matrix is the product of the block determinants, we have
\begin{equation*}
    ln |A| = T * ln |I-Q^{*}|.
\end{equation*}
\end{proof}

Given the \emph{STAR Log-Determinant} Theorem the problem of evaluating $ln |A|$ reduces to finding an efficient way to evaluate $|I-Q^{*}|$. We note that if $G = 1$ (i.e. only a single outcome is modeled), then $|I-Q^{*}| = |I-\rho W|$, and thus any of the various methods proposed for evaluating the log-Jacobian of the spatial autoregressive model may be used (see \citealt{bivand2013computing}). One popular strategy for this setting is \citeauthor{ord75estimation}'s method (\citeyear{ord75estimation}). For cases where $N$ is large and Ord's method is impractical, \cite{smirnov2009n} propose a method that has complexity $\mathcal{O}(N)$. In our implementation of the MCEM estimator, we employ the same strategy as \cite{wilhelm2013estimating} and precompute $|I-\rho W|$ for a range of $\rho$ values prior to optimization using a sparse $LU$ decomposition, and simply look up the value of the log-determinant during optimization.

If $G > 1$, we propose using a sparse Cholesky decomposition for computing $|I-Q^{*}|$, similar to the approach proposed by \cite{pace1997sparse} for the regular spatial autoregressive case. Using the sparse Cholesky decomposition raises a problem: The Cholesky decomposition is only applicable to symmetric, positive-definite matrices, but there is no guarantee of $|I-Q^{*}|$ meeting either of these requirements. Indeed, if $W$ is row-standardized, then  $|I-Q^{*}|$ will \emph{not} be symmetric. We address this problem by relying on the following `trick' suggested by \cite{pace1997sparse}:

\begin{align*}
    |I-Q^{*}| & = (|I-Q^{*}||I-Q^{*}|)^\frac{1}{2} \\
              & = (|(I-Q^{*})'||I-Q^{*}|)^\frac{1}{2} \\
              & = (|(I-Q^{*})'(I-Q^{*})|)^\frac{1}{2},
\end{align*}

whereas $(I-Q^{*})'(I-Q^{*})$ is both positive-definite and symmetric.

The computational complexity of the sparse Cholesky decomposition is not well defined, and depends on the fill-pattern of the matrix to be decomposed. However, we show in the simulations below that if $W$ is built from a regular lattice grid and the number of neighbors per unit remains constant, then the complexity is approximately $\mathcal{O}(G^2 * N^2)$.

\subsection*{Optimization}

At each MCEM iteration, we need to solve the following constrained optimization problem:

\begin{maxi*}|l|
  {\theta}{Q(\theta | \theta^{m})}{}{}
  \addConstraint{\textstyle |\sum_{(j,k)} \lambda_{j,k} + \rho_j + \gamma_j|}{<1}{}
  \addConstraint{\sigma^2_j}{>0}{}
  \addConstraint{|\rho_j|}{<1}{}
  \addConstraint{|\gamma_j|}{<1}{}
  \addConstraint{|\lambda_{j,k}|}{<1, \quad}{\forall \ j \in G \ \land \ \text{pairs}(j,k) \in G.}
\end{maxi*}

We enforce the constraints on the scalar dependence parameters ($\rho_j$, $\gamma_j$, $\lambda_{j,k}$) by optimizing with respect to the inverse hyperbolic tangent (e.g. $atanh(\rho_j)$) of the parameters, and mapping these transformed values onto the $(-1, 1)$ interval during the evaluation of the $Q$ function. Analogously, we optimize with respect to the inverse-softplus transformed variance parameters.\footnote{The softplus function (also known as the smooth linear rectifier) is given by $softplus(x) = ln(1 + exp(x))$.} The stationarity constraint $|\sum_{(j,k)} \lambda_{j,k} + \rho_j + \gamma_j|$ we enforce by employing a constrained optimization algorithm, namely the general nonlinear augmented Lagrange multiplier method solver by \cite{ye1988interior}. 

\subsection*{Standard Errors}

We estimate standard errors for $\theta$ using an MC approximation to \citeauthor{louis1982finding}' (\citeyear{louis1982finding}) method, similarly to \cite{natarajan2000monte}. 
Louis gives the following identity for the observed Fisher information matrix in the EM framework,

\begin{equation*}
    I(\theta) = -E_{z}\left[ H(\theta | Y) \right] - V_z \left[ S(\theta | Y) \right],
\end{equation*}

whereas $H$ and $S$ are the Hessian and gradient of the full-data log-likelihood function \eqref{eq:fulldata}, respectively. We obtain estimates of the expectation (and, respectively, variance) of these quantities via simulation and numerical differentiation.

\section{Simulation Results}
\label{sec:sim}

\begin{figure}[t]
\includegraphics[width=\textwidth]{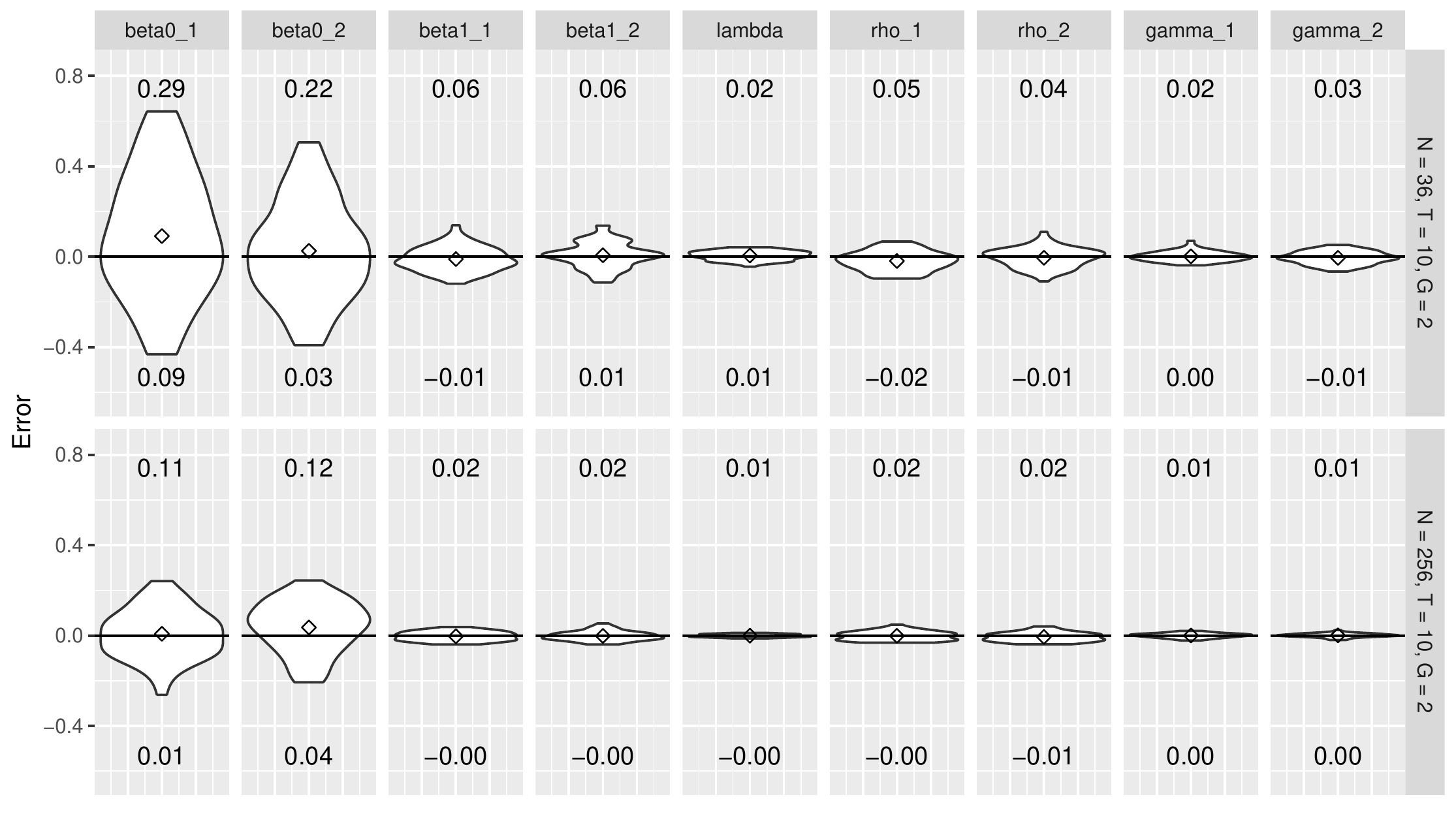}
\caption{Error distribution for all parameter estimates in a spatio-temporal count model with a bivariate outcome, for two different sample sizes. Estimates based on 50 Monte Carlo simulations per sample size. Top figures in each panel indicates RMSE, bottom figures indicate bias.}
\label{fig:bias_count}
\end{figure}

\subsection*{Bias, RMSE, and Coverage}

The goal of this first simulation study is to evaluate the finite sample properties of our MCEM estimator, and to validate our implementation of the underlying algorithms.

We generate simulated data from the following specification,
\begin{align*}
    \epsilon_{j,t} & \sim N(0, I) \\
    z_{j,t} & = X_{j,t} \beta_j + \rho_j W z_{j,t} + \gamma_j z_{j,t-1} + \sum_{k \neq j} \lambda_{k,j}  z_{k,t} + \epsilon_{j,t} \quad \forall \ t > 1 \\
    z_{j,t} & = X_{j,t} \beta_j + \rho_j W z_{j,t} + \sum_{k \neq j} \lambda_{k,j}  z_{k,t} + \epsilon_{j,t} \quad \text{if} \ t = 1 \\
    Y_{j,i,t} & \sim \text{Pois}(exp(z_{j,i,t})),
\end{align*}

whereas each $X_{j,t}$ consists of a unit constant, and a single covariate drawn from $N(0, 1)$. The true parameters are held constant at $\beta_j = [2, 1]'$ and $\rho_j = \lambda_{k,j} = \gamma_j = 0.25$. The spatial weights matrix $W$ is constructed from a regular square grid with side length $N^{0.5}$. We perform two sets of experiments, one with a small sample size of $\{G = 2, \ T = 10, \ N = 36\}$, and one with a larger sample size of $\{G = 2, \ T = 10, \ N = 256\}$. Each experiment is repeated 50 times.\footnote{All experiments reported in this subsection involve count outcomes. Results for a set of equivalent experiments with binary outcome data are available in the Appendix, with largely identical findings.}

For each simulated dataset, we use our MCEM estimator to recover point and standard error estimates for the full parameter vector $\theta$. We use a MC sample size of $S = 50$ for the MCEM algorithm, and a sample size of $S = 100$ for estimating the Fisher Information matrix. We run the MCEM algorithm for 50 iterations, unless the maximum absolute change across all estimated parameters is smaller than $1^{-4}$ in any previous iteration. 

Figure \ref{fig:bias_count} summarizes the distribution of the parameter-wise estimation error ($\hat{\theta}_k - \theta_k$) over all simulation draws, for both sets of experiments. Clearly, the MCEM estimator is able to recover the true parameter values, and the decrease in bias and RMSE (root mean square error) with larger sample size suggests consistency.  We note that the recovered estimates have low RMSE (compared to the true parameter values) even though the models were estimated with a relatively small MC sample size. Increasing the MC sample size would likely yield a further reduction in RMSE.

\begin{figure}[t]
\includegraphics[width=\textwidth]{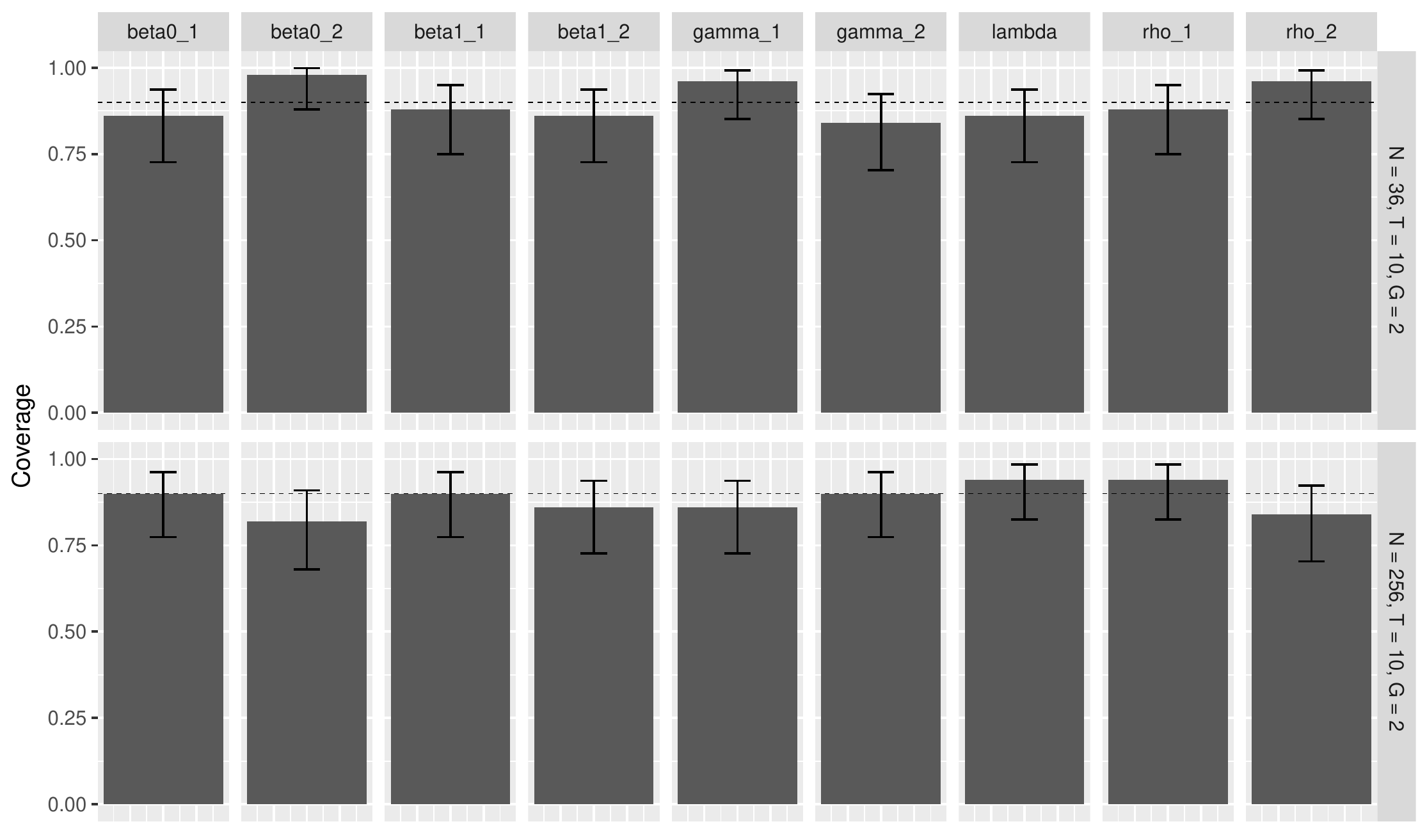}
\caption{90\% CI coverage probability for all parameter estimates in a spatio-temporal count model with a bivariate outcome. Estimates based on 50 Monte Carlo simulations per sample size. Error bars represent 95\% CI for the coverage proportion.}
\label{fig:coverage_count}
\end{figure}

Next, Figure \ref{fig:coverage_count} shows the coverage probability of the 90\% confidence intervals constructed from the estimated standard errors. Again, we observe that the standard error estimates appear to be unbiased; in no instance can we reject the null that the true coverage probability is 90\%.

\subsection*{Time Complexity of Expected Likelihood Evaluation}

\begin{figure}[t]
 
\begin{subfigure}{0.5\textwidth}
\includegraphics[width=1\linewidth]{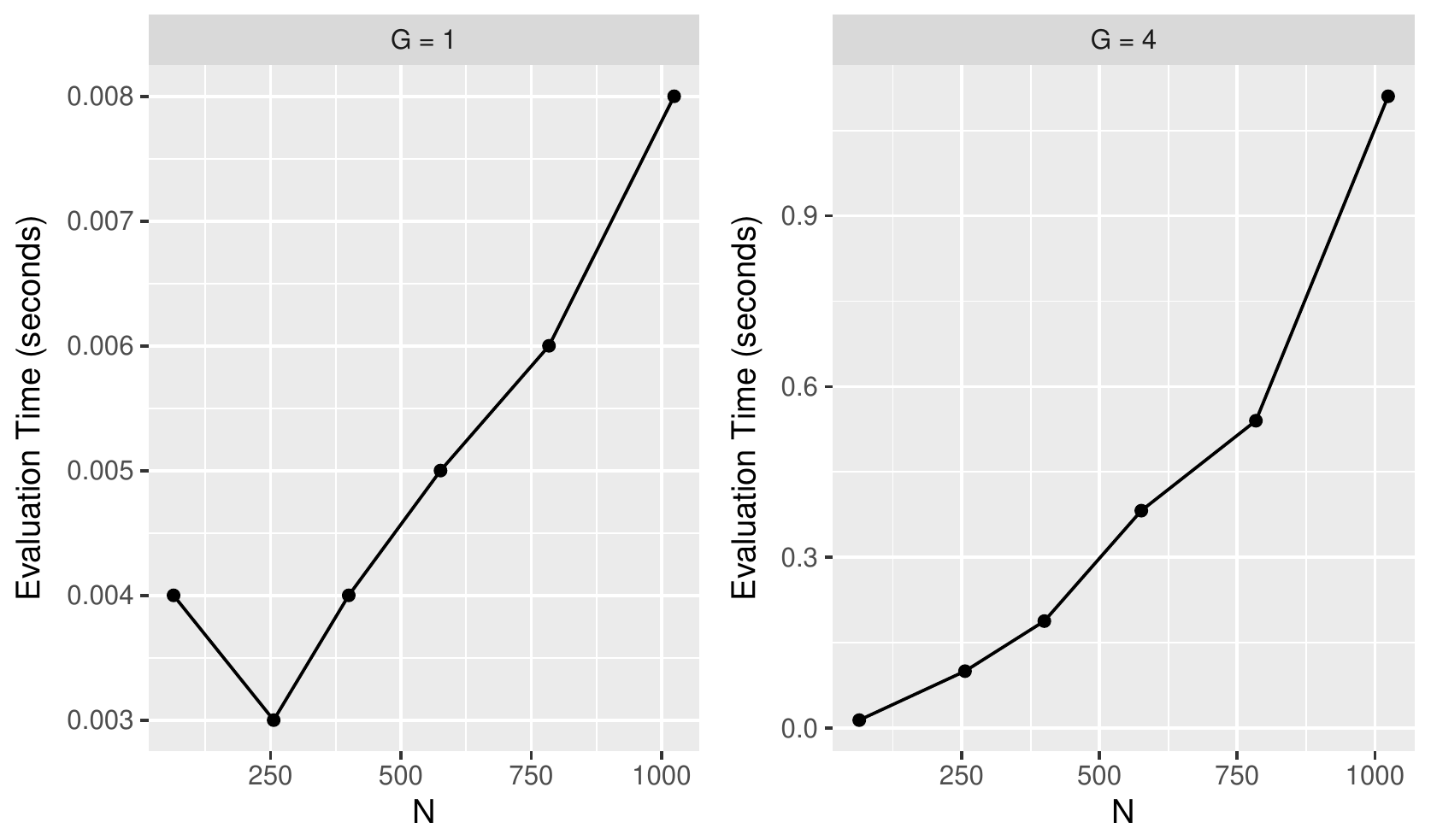} 
\caption{\vspace{0.5cm}CPU time in $N$.}
\label{fig:subim1}
\end{subfigure}
\begin{subfigure}{0.5\textwidth}
\includegraphics[width=1\linewidth]{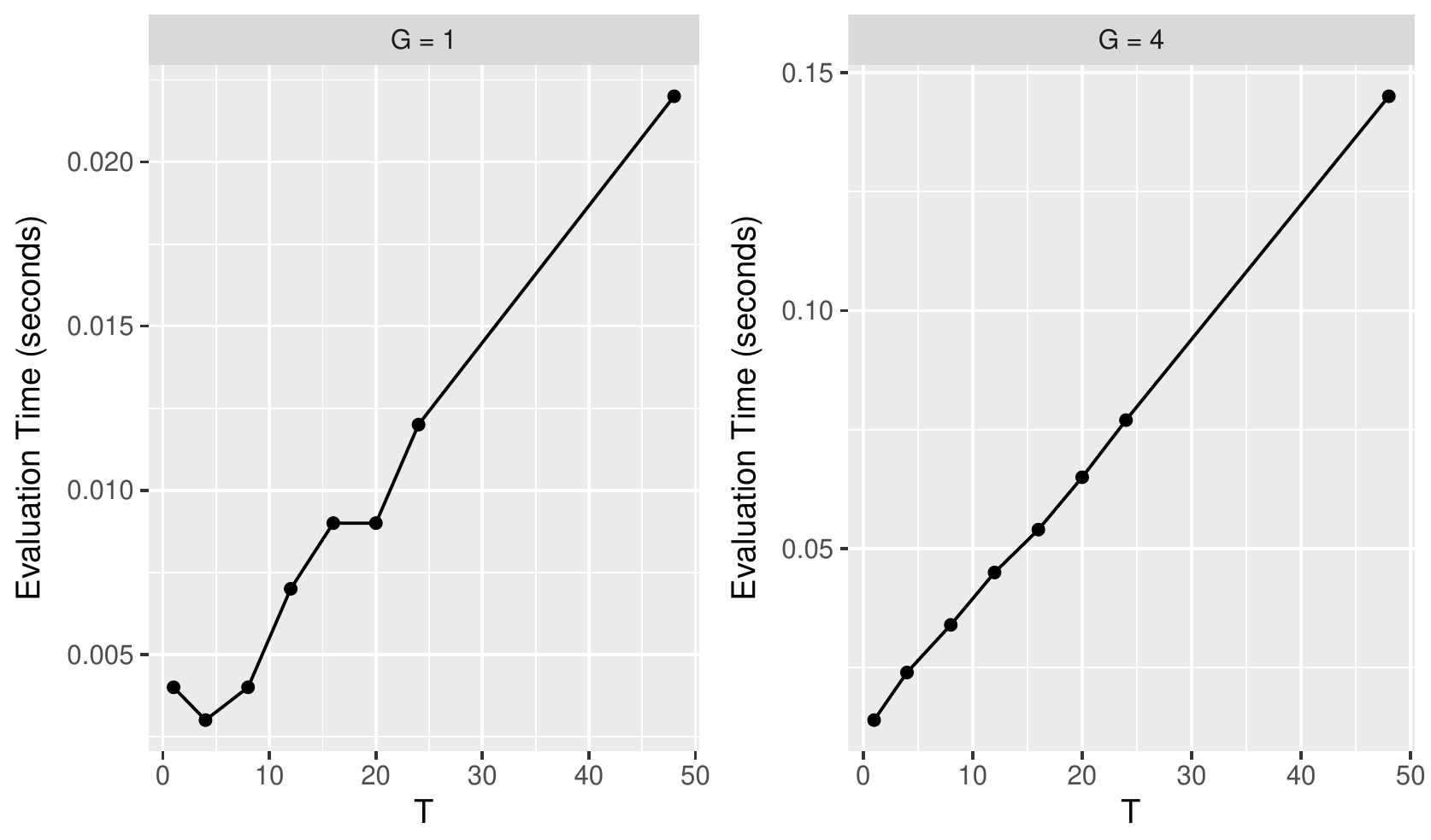}
\caption{\vspace{0.5cm}CPU time in $T$.}
\label{fig:subim2}
\end{subfigure}
\begin{subfigure}{0.5\textwidth}
\includegraphics[width=1\linewidth]{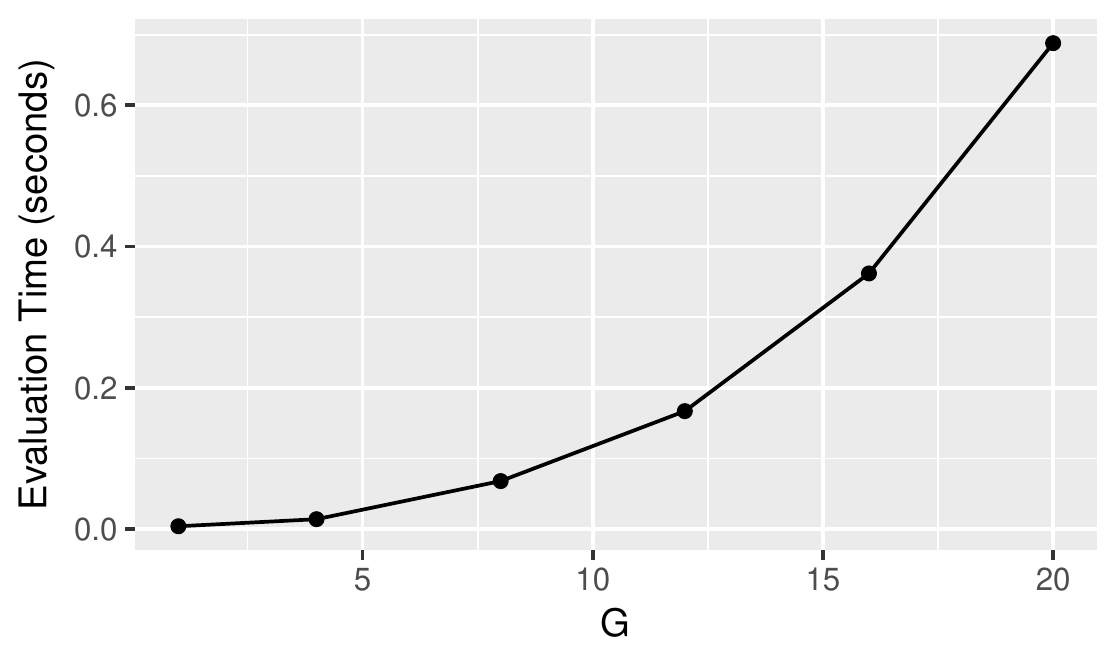}
\caption{\vspace{0.5cm}CPU time in $G$.}
\label{fig:subim3}
\end{subfigure}
\caption{Evaluation time of the expected log-likelihood function (the $Q$ function) as a function of $N$, $T$, and $G$. Each estimate based on 5 iterations.}
\label{fig:complexity}
\end{figure}

The most important step towards a scalable implementation of proposed estimator is ensuring that the evaluation of the expected log-likelihood (or $Q$ function) in the $M$ step of the MCEM algorithm is as efficient as possible. In the following set of simulation experiments, we evaluate how well our implementation of the expected log-likelihood scales in the number of units ($N$), time periods ($T$) and outcomes ($G$). This exercise serves two purposes: first, to validate the analytical complexity estimates derived in Section \ref{sec:estim}, and second, to obtain empirical complexity estimates for datasets where $G>1$. Recall that in the latter case, we were unable to derive complexity analytically because of the sparse Cholesky decomposition involved.

For all experiments, we sample $S = 50$ instances of the latent-Gaussian $z$ using the same data generating process, true parameter values, and spatial weights matrix $W$ as discussed in the previous section. Each experiment consists of evaluating the expected log-likelihood five times at the true parameter values; we report the total CPU time accumulated during those five evaluations. We repeat the experiment for different dataset sizes, varying $N$, $T$ and $Q$ separately. 

Figure \ref{fig:complexity} summarizes the results. Panel \ref{fig:subim1} shows CPU time as a function of $N$, for $G = \{1, 4\}$ and $T = 1$. As expected, performance is linear in $N$ for the univariate outcome case ($G = 1$), except for very small $N$. When $G = 4$, we find that performance is approximately quadratic. As discussed extensively in Section \ref{sec:estim}, this is because if $G > 1$, we need to perform a sparse Cholesky decomposition of a matrix of size $NG \times NG$ to evaluate $ln |A|$. While the complexity of this operation is problem dependent, the results presented here suggest that it is much more efficient than the regular (dense) Cholesky decomposition, which has cubic complexity.

Next, Panel \ref{fig:subim2} shows CPU time as a function of $T$, with $G = \{1, 4\}$ and $T = 1$. As expected, we find that performance is linear in $T$, even for $G>1$. The latter property is thanks to the \emph{STAR Log-Determinant} theorem, which permits efficient evaluation of $ln |A|$ regardless of $T$.

Finally, Panel \ref{fig:subim3} shows changes in CPU time as we vary $G$, with $N = 64$ and $T = 1$. As anticipated in Section \ref{sec:estim}, we find that performance is quadratic in $G$.

In summary, the cost of evaluating the expected log-likelihood is linear in the number of units and time-periods if the outcome is univariate, and roughly quadratic in the number of units if the outcome is multivariate. Also, evaluating the expected log-likelihood becomes quadratically more expensive as the number of modeled outcomes increases. While these findings do not translate into total run-times for the full MCEM algorithm directly (which depends on the problem at hand and the convergence criterion employed), they demonstrate that our estimator is far more efficient than a `naive` approach that relies on direct evaluation of the log-determinant and dense matrix algebra.

\subsection*{Spatial Probit Comparison}
\label{sec:spat_probit}

\begin{figure}[t]
\includegraphics[width=\textwidth]{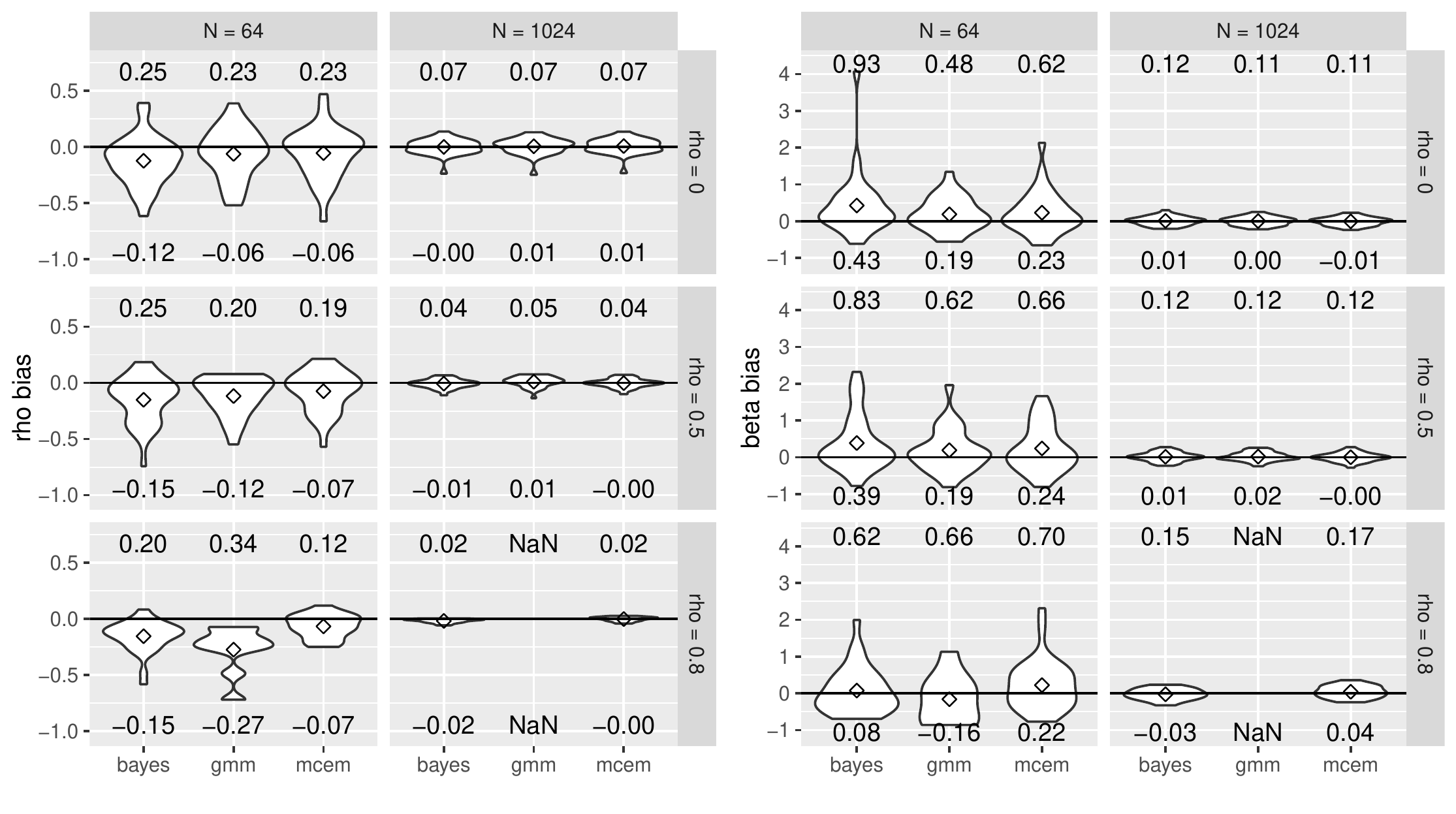}
\caption{Error distribution for the estimated $\rho$ and $\beta_1$ parameters of a Spatial Probit model, estimated from 30 simulations. Top figures in each panel indicates RMSE, bottom figures indicate bias. The $N = 1024, \ \rho = 0.8$ distribution for the \emph{GMM} model is missing because all estimation attempts failed.}
\label{fig:sprobit_bias}
\end{figure}

\begin{figure}[ht]
\includegraphics[width=\textwidth]{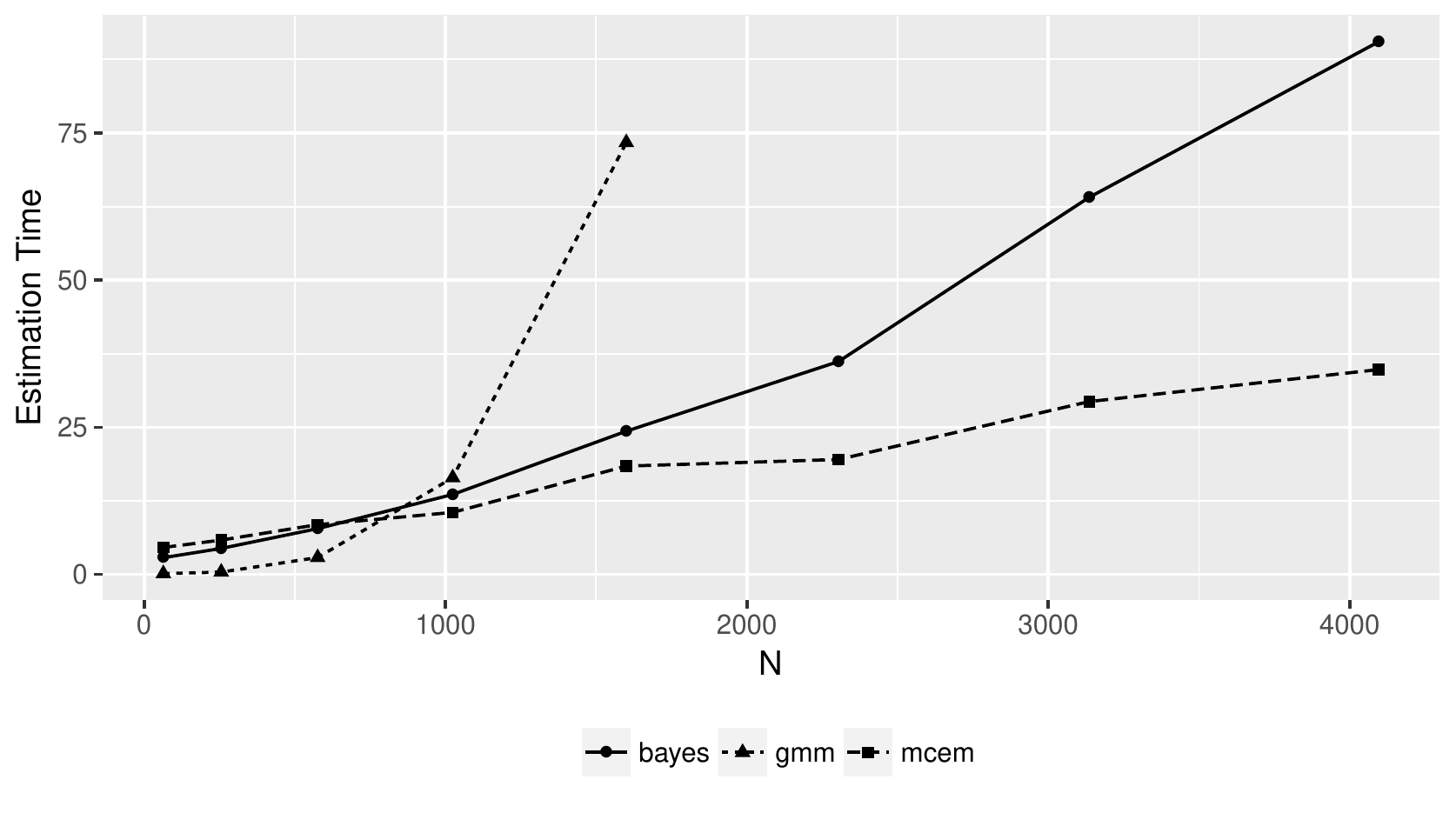}
\caption{CPU time (in seconds) for three different estimation strategies for training a Spatial Probit model, averaged over 5 simulations each.}
\label{fig:sprobit_time}
\end{figure}

In a final set of simulation experiments we compare our MCEM estimator against two alternative approaches for estimating a Spatial Probit model, i.e. a setup where the outcome is binary and the data univariate and cross-sectional (i.e. $G = T = 1$). We focus on this setting because it allows us to compare our estimator against a pair of readily implemented and widely used alternatives. The first is the Bayesian Spatial Probit model, estimated via MCMC, as proposed by \cite{lesage2000bayesian} and implemented in the R programming language by \cite{wilhelm2013estimating}. The second is the GMM estimator for Spatial Probit models, introduced by \cite{pinkse1998contracting}, and improved and implemented in R by \cite{klier2008clustering}.

We perform two sets of experiments. In the first set, we draw data from a Spatial Probit data generating process with a design matrix composed of a constant and a single predictor $x \sim N(0, 1)$, and $\beta = [0, 2]'$. The spatial weights matrix $W$ is constructed from a regular square grid with side length $N^{0.5}$. We vary the sample size ($N = {64, 1024}$), as well as the magnitude of spatial autocorrelation ($\rho = {0, 0.5, 0.8}$). We repeat each experiment 30 times.

For each simulated dataset, we use our MCEM estimator to recover point estimates for $\beta$ and $\rho$. We use a MC sample size of $S = 50$, and run the MCEM algorithm for 75 iterations, unless the maximum absolute change across all estimated parameters is smaller than $10^{-4}$ in any previous iteration. Correspondingly, we also recover point estimates using the MCMC and the GMM estimator. For MCMC sampling we rely on the default settings provided in the R package by \cite{wilhelm2013estimating}.

To establish whether the estimators are able to accurately recover the true parameters, we evaluate the distribution of the estimation errors $\rho - \hat{\rho}$ and $\beta - \hat{\beta}$. Figure~\ref{fig:sprobit_bias} plots the corresponding results. We find that (i) all three methods are unbiased and have comparable RMSE if $N$ is sufficiently high, (ii) our estimator exhibits the smallest bias/RMSE in $\rho$ if $N$ is small, and (iii) GMM estimation fails for large $N$ and $\rho$. 

In a second set of experiments, we simulate data from the same data generating process, but fix $\rho = 0.5$ and only vary $N$. The goal here is to evaluate how well the three estimators scale in terms of estimation time. The results of these experiments are reported in Figure~\ref{fig:sprobit_time}. We find that estimation time (i) increases cubically for the $GMM$ implementation, (ii) increases super-linearly for the MCMC implementation, and (iii) -- as anticipated -- increases linearly for our estimator.

\section{Application: IS Violence in Syria}
\label{sec:application}

\begin{figure}[ht]
\includegraphics[width=\textwidth]{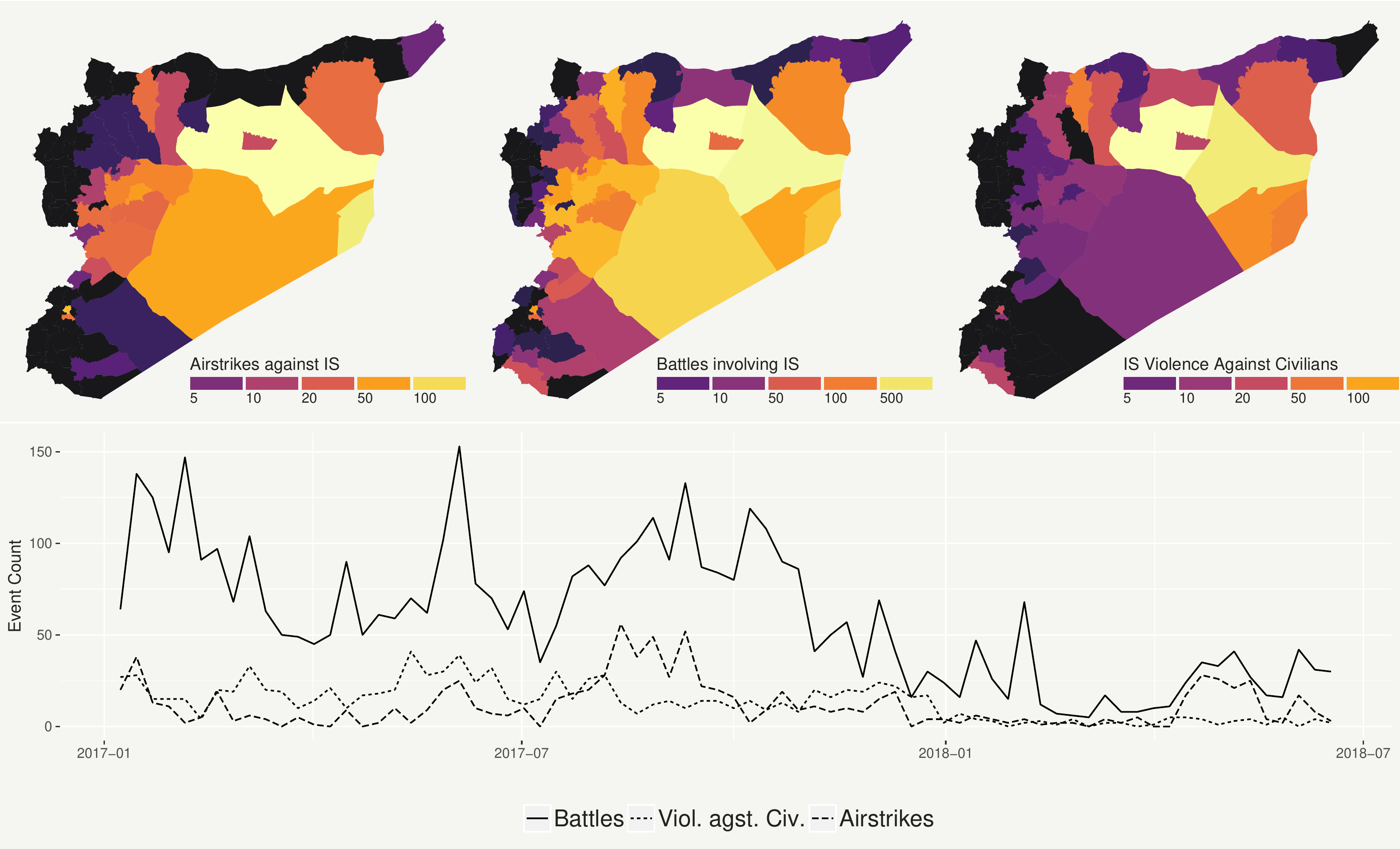}
\caption{Weekly IS-related events in Syrian Districts, January 2017 -- June 2018.}
\label{fig:events}
\end{figure}

In this section, we apply the proposed estimator to a spatio-temporal model of Islamic State (IS)-related violence in Syrian districts between January 2017 and June 2018. In particular, we investigate whether the occurrence of airstrikes targeting IS forces serves as an advance signal of IS-related violence in the following week.
Methodologically, the application demonstrates how accounting for spatio-temporal dependence may affect substantive conclusions, the use of elasticities for interpretation, and the use of out-of-sample predictions to evaluate model fit.

The application speaks to a growing number of studies in Political Science and Economics that analyze political violence using event-level data (see \citealt{schrodt2012precedents} for a review).
It is widely acknowledged that political violence exhibits considerable spatial and temporal dependence, as contested areas often shift and expand over the course of a conflict  \cite[e.g.][]{weidmann2010predicting}.
However, many analyses of political violence account for spatial and temporal dependence only via standard error adjustments, not least because appropriate modeling techniques for non-Gaussian outcomes have long been unavailable \citep[e.g.][]{pierskalla2013technology, buhaug2011s}.
Moreover, the literature suggests that there is considerable interdependence between different forms of political violence, especially during civil wars, where violence against civilians and battlefield violence between combatants are closely linked \citep{Kalyvas2006}. 
By using our estimator, we are able to fit a model that accounts for spatio-temporal dependence, and is able to capture interdependence between violence against civilians and battlefield violence directly.

We analyze a panel dataset containing weekly observations for all 68 Syrian administrative districts for the 78 calendar weeks between January 8th 2017 and June 23rd 2018, totaling 5168 observations.
We study two count outcomes: (1) The number of battle-events involving IS forces per district-week, and (2) the number of instances where IS forces perpetrated violence against unarmed civilians per district-week. Data for both variables were constructed from the ACLED dataset \citep{raleigh10acled},  which provides event-level data of political violence in Syria beginning in 2017.
We include a total of five predictors. The first four are time-invariant geographic/demographic variables that are known to correlate with the location of political violence \citep{tollefsen2015insurgency}: The area of a district; its population in 2000; the average travel time to the next city in hours, averaged over all locations in a district; and the proportion of the district area that is covered by mountains. All variables but the population count originate from the PRIO-Grid dataset \citep{tollefsen2012prio}. The population data are obtained from \cite{CIESIN2010}.
The fifth predictor is a lagged count of the number of airstrikes targeted at IS positions in a given district. This is the sole time-variant predictor, and is included under the hypothesis that airstrikes may reveal information about where future fighting is going to occur. This variable is also constructed from the ACLED dataset. All predictors except for the proportion of mountainous terrain are logged.\footnote{We add a unit constant to the airstrikes variable prior to taking the natural logarithm.}
Figure~\ref{fig:events} summarizes the spatial and temporal distribution of the two outcomes, as well as the airstrikes predictor.

We estimate three models. First, a latent-Gaussian count model without spatial or temporal dependence parameters, estimated for both outcomes separately (Model 1).
Second, a latent-Gaussian count model with spatio-temporal dependence parameters, again estimated separately for each outcome (Model 2).
And third, a multivariate latent-Gaussian count model with spatio-temporal dependence parameters, estimated for both outcomes jointly, and including a parameter for capturing outcome dependence (Model 3).
All models are estimated using our MCEM estimator, using 50 MC samples for parameter estimation and 100 MC samples for standard error estimation. Each model is run for 15 iterations, and convergence is ensured by checking the largest absolute change across all parameter estimates after each iteration, ensuring that the series is stationary.

For each model and outcome, we report point and standard error estimates, direct elasticities, and spillover elasticities. 
Direct elasticities indicate the increase in the expected outcome for unit $i$, in percent, as a result of a one percent increase in the respective predictor for unit $i$. For Model 1 the direct elasticities are equivalent to the point estimates.\footnote{Note that for the \emph{Mountains} predictor the reported value is actually a semi-elasticity because the variable is not logged.} For Models 2 and 3 the reported direct elasticities are averaged over all observations, as non-zero spatial- or outcome dependence leads to unit-specific elasticities due to the spatial/outcome-specific feedback effect.
Finally, the spillover elasticities indicate the relative increase in the sum of expected outcomes over all units $j \neq i$ in period $t$, in percent, as a result of increasing the predictor of unit $i$ in period $t$ by 1 percent. Naturally, all spillover elasticities are zero for Model 1.

\begin{sidewaystable}[htbp]\centering
\newcommand\sym[1]{\rlap{$^{#1}$}}
\footnotesize
\begin{tabular*}{\columnwidth}{
  @{\hspace{\tabcolsep}\extracolsep{\fill}}
  l*{9}{D{.}{.}{-1}}
}
  &\multicolumn{3}{c}{Model 1}&\multicolumn{3}{c}{Model 2}&\multicolumn{3}{c}{Model 3}\\
  \cmidrule(lr){2-4} \cmidrule(lr){5-7} \cmidrule(lr){8-10}
  \addlinespace
    &\multicolumn{1}{c}{Coef.}& \multicolumn{1}{c}{Direct Elast.}& \multicolumn{1}{c}{Spillover}
    &\multicolumn{1}{c}{Coef.}& \multicolumn{1}{c}{Direct Elast.}& \multicolumn{1}{c}{Spillover}
    &\multicolumn{1}{c}{Coef.}& \multicolumn{1}{c}{Direct Elast.}& \multicolumn{1}{c}{Spillover} \\
\midrule
\multicolumn{2}{l}{\bf \emph{Battles}} \\
\addlinespace
Airstrikes (log, lag) &             1.861^{***} &   1.861 &     0.000 &     0.184^{***} &   0.184 &     0.015 &     0.126^{*} &     0.140 &     0.012\\
                      &            (0.069) & & &                            (0.052) & & &                           (0.054)\\ 
Area (log)            &             2.144^{***} &   2.144 &     0.000 &     0.472^{***} &   0.472 &     0.038 &     0.389^{***} &   0.405 &     0.030\\
                      &             (0.130) & & &                           (0.088) & & &                           (0.078)\\
Population (log)      &              0.083^{*} &    0.083 &     0.000 &     0.043^{**} &    0.043 &     0.004 &     0.047^{***} &   0.048 &     0.003\\
                      &              (0.034) & & &                          (0.014) & & &                           (0.011)\\
Remoteness (log)      &              -1.154^{***} & -1.154 &    0.000 &     -0.289^{***} &  -0.290 &    -0.024 &    -0.192^{**} &   -0.198 &    -0.014\\
                      &              (0.148) & & &                          (0.071) & & &                           (0.071)\\
Mountains             &              -1.226^{***} & -1.226 &    0.000 &     -0.093 &        -0.093 &    -0.008 &    -0.034 &        -0.043 &    -0.005\\
                      &              (0.216) & & &                          (0.051) & & &                           (0.047) \\
$\gamma$              &              & & &                                  0.866^{***} & & &                       0.874^{***}\\
                      &              & & &                                  (0.016) & & &                           (0.011)\\
$\rho$                &              & & &                                  0.076^{***} & & &                       0.063^{***}\\
                      &              & & &                                  (0.014) & & &                           (0.011)\\
\midrule
\multicolumn{2}{l}{\bf \emph{Violence Against Civilians}} \\
\addlinespace
Airstrikes (log, lag) &             0.954^{***} &   0.954 &     0.000 &     0.404^{***} &   0.407 &     0.086 &     0.313^{***} &   0.321 &    0.059 \\
                      &            (0.045) & & &                            (0.052) & & &                           (0.050)\\
Area (log)            &              1.251^{***} &  1.251 &     0.000 &     0.454^{***} &   0.457 &     0.097 &     0.319^{***} &   0.339 &    0.063 \\
                      &              (0.136) & & &                          (0.066) & & &                           (0.072)\\
Population (log)      &              0.072^{***} &  0.072 &     0.000 &     0.030 &         0.030 &     0.006 &     0.031^{*} &     0.033 &     0.006\\
                      &              (0.017) & & &                          (0.016) & & &                           (0.013)   \\
Remoteness (log)      &              -0.343^{***} & -0.343 &    0.000 &     -0.241^{***} &  -0.242 &    -0.051 &    -0.104 &        -0.113 &    -0.021\\
                      &              (0.079) & & &                          (0.067) & & &                           (0.066)  \\
Mountains             &              -0.722^{***} & -0.722 &    0.000 &     -0.186^{**} &   -0.187 &    -0.040 &    -0.190^{***} &  -0.193 &   -0.035\\
                      &              (0.092) & & &                          (0.055) & & &                           (0.053)\\
$\gamma$              &              & & &                                  0.679^{***} & & &                       0.643^{***}\\
                      &              & & &                                  (0.019) & & &                           (0.018)\\
$\rho$                &              & & &                                  0.181^{***} & & &                       0.158^{***}\\
                      &              & & &                                  (0.017) & & &                           (0.027)\\
\midrule
$\lambda$             &              & & &   & & &                                                                  0.045^{***} \\
                      &              & & &   & & &                                                                  (0.009) \\
\midrule
Obs. & \multicolumn{1}{c}{5168} & & & \multicolumn{1}{c}{5168} & & &  \multicolumn{1}{c}{5168} \\
\bottomrule
\multicolumn{10}{l}{\footnotesize * \(p<0.05\), ** \(p<0.01\), *** \(p<0.001\)}\\
\multicolumn{10}{l}{\footnotesize Note: Standard errors in parentheses. All elasticities are averaged over units and time-periods.}\\
\end{tabular*}
\vspace{0.2cm}
\caption{Spatio-temporal models of weekly IS-related battles \& violence against civilians in Syrian districts, Jan. 2017 -- Jun. 2018.\label{tab1}}
\end{sidewaystable}

\begin{table}[htbp]\centering
\newcommand\sym[1]{\rlap{$^{#1}$}}
\footnotesize
\begin{tabular*}{\columnwidth}{
  @{\hspace{\tabcolsep}\extracolsep{\fill}}
  l*{6}{D{.}{.}{-1}}
}
  &\multicolumn{2}{c}{Model 1}&\multicolumn{2}{c}{Model 2}&\multicolumn{2}{c}{Model 3}\\
  \cmidrule(lr){2-3} \cmidrule(lr){4-5} \cmidrule(lr){6-7}
  \addlinespace
    &\multicolumn{1}{c}{RMSE}& \multicolumn{1}{c}{MAE}
    &\multicolumn{1}{c}{RMSE}& \multicolumn{1}{c}{MAE}
    &\multicolumn{1}{c}{RMSE}& \multicolumn{1}{c}{MAE} \\
\midrule
\emph{Battles} &                13.930 &    1.835 & 2.599^{*} & 0.680^{*} & 2.677 & 0.699\\
\emph{Viol. agst. Civilians} &  1.094 &     0.413 & 0.946 & 0.309 & 0.934^{*} & 0.298^{*}\\
\bottomrule
\multicolumn{7}{l}{* Smallest value per outcome.}\\
\end{tabular*}
\vspace{0.2cm}
\caption{Out-of-sample loss for all models summarized in Table \ref{tab1}. RMSE: Root mean squared error; MAE: Mean absolute error.\label{tab2}}
\end{table}

Table \ref{tab1} summarizes the results. Three findings are worth highlighting.
First, airstrikes appear to be a useful predictor of both battle events as well as violence against civilians. However, the effect size decreases drastically once we take spatio-temporal dependence into account, namely from a direct elasticity of 1.86/0.95 to 0.18/0.41. This is likely because Model 1 falsely attributes temporal autocorrelation in the outcomes to the airstrikes variable. Still, the predictive power of airstrikes is substantial. For instance, a doubling of the number of airstrikes observed in a given district is expected to increase the number of events involving violence against civilians in the same district by between 32\% (Model 2) and 41\% (Model 3) in the following week. 
Second, we find little evidence of spatial dependence in battles, but some evidence for spatial autocorrelation for the violence against civilians outcome. For instance, a doubling of the number of airstrikes in a district leads to between 8.6\% (Model 2) and 5.9\% (Model 3) more instances of violence against civilians across all other districts in the following week. 
Third, surprisingly, we find only very moderate dependence between the two outcomes, with the $\lambda$ estimate in Model 3 statistically significant, but close to zero.

Finally, we compare the models' goodness of fit via their out-of-sample predictive power.
Specifically, we reestimate all models while censoring the outcome value for an arbitrary 33\% of all observations. Thanks to the MC step of our MCEM estimator, predictions for the omitted outcomes are generated automatically as a side-product of the estimation procedure.
Table~\ref{tab2} reports root mean squared prediction error (RMSE) and mean absolute prediction error (MAE) for all models. Clearly, accounting for spatio-temporal dependence leads to much better model fit, with both error metrics being smaller in Models 2 and 3 than in Model 1. 
Moreover, the multivariate model (Model 3) yields a slightly better fit for the violence-against-civilians outcome, but not for the battle outcome.

\section{Concluding Remarks}
\label{sec:conclusion}

The goal of the work reported here was to introduce a scalable method for estimating models of spatio-temporal dependence in non-Gaussian and potentially multivariate outcomes. 
We addressed this task by proposing a Monte Carlo Expectation Maximization (MCEM) procedure for estimating the parameters of a family of latent-Gaussian spatio-temporal autoregressive models, discussing the cases of binary and count outcomes explicitly.
Relying on an MCEM approach avoids having to evaluate the intractable likelihood of the latent Gaussian spatio-temporal model, and offers a straightforward strategy for generating out-of-sample predictions during model training.

We confirmed through simulation experiments that our estimator yields consistent parameter and standard error estimates in finite samples. 
We also showed that our proposed estimator scales well with large datasets, both analytically and in simulations. In the case of a univariate outcome and a fixed number of spatial neighbors per unit, our implementation of the expected log-likelihood function has linear time complexity in the number of units and time periods. For multivariate outcomes, the time complexity is approximately quadratic in the number of units. Further, the simulation experiments in Section~\ref{sec:spat_probit} highlighted that our estimator compares favorably against other options for estimating the Spatial Probit model, both in terms of error and estimation time. 

Finally, we illustrated the usefulness of our estimator by examining weekly counts of IS-related violence in Syria. We demonstrated the use of elasticities for interpretation, and how out-of-sample predictions may be employed to assess model fit. Substantively, we found that airstrikes may serve as a useful advance signal for predicting the occurrence of civil-war related violence.

The work presented here may be usefully extended in several directions. While we limit our discussion to binary and conditionally Poisson distributed count outcome data, there are various other outcome distributions that may be of interest. In particular, using the Negative Binomial may be necessary for count outcomes that are highly overdispersed, and thus inadequately modeled by the Poisson log-Normal setup discussed in this paper \citep[see also][]{liesenfeld2016likelihood}.
Another potentially fruitful avenue for further research may be extending our MCEM procedure such that the MC sample size is determined dynamically during estimation, e.g. using the method introduced by \cite{caffo2015}. This would likely lead to estimates exhibiting less variance, potentially speed up estimation, and provide a more appropriate strategy for assessing convergence. 

\FloatBarrier
\newpage

\section*{Acknowledgments}
We would like to thank Frederick Boehmke, Lars-Erik Cederman, Robert Franzese Jr., Dennis Quinn, Andrea Ruggeri, Emily Schilling, Martin Steinwand, Oliver Westerwinter, Michael Ward, Christopher Zorn, and Bruce Desmarais for their useful comments on previous versions of this paper.

\bibliography{main}

\bibliographystyle{apsr}

\FloatBarrier
\newpage

\section*{Appendix}

\subsection*{Bias, RMSE, and Coverage for the Spatio-Temporal Probit Model}

\begin{figure}[ht]
\includegraphics[width=\textwidth]{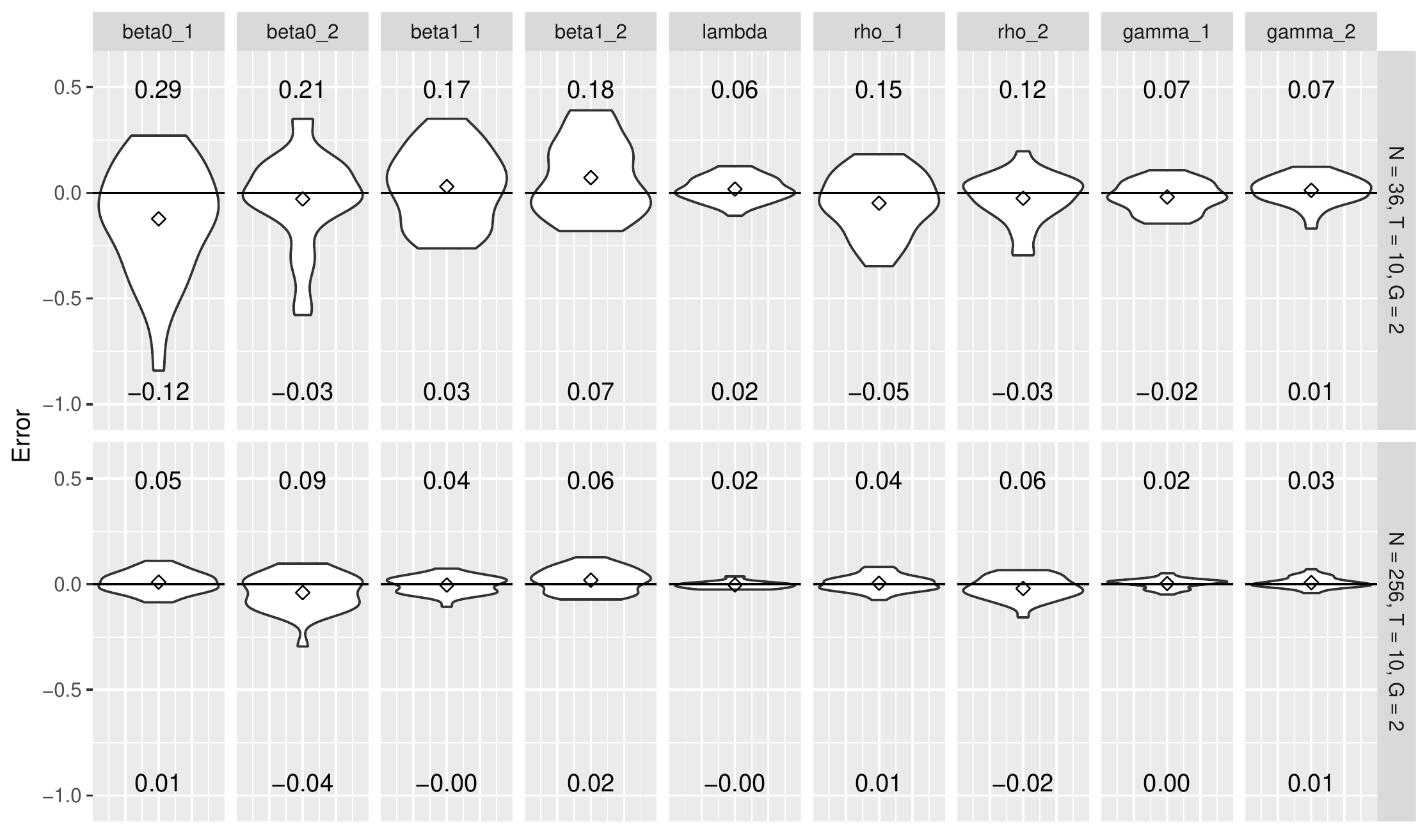}
\caption{Error distribution for all parameter estimates in a spatio-temporal Probit model with a bivariate outcome, for two different sample sizes. Estimates based on 50 Monte Carlo simulations per sample size. Top figure in each panel indicates RMSE, bottom figures indicate bias.}
\label{fig:bias_binary}
\end{figure}

\begin{figure}[ht]
\includegraphics[width=\textwidth]{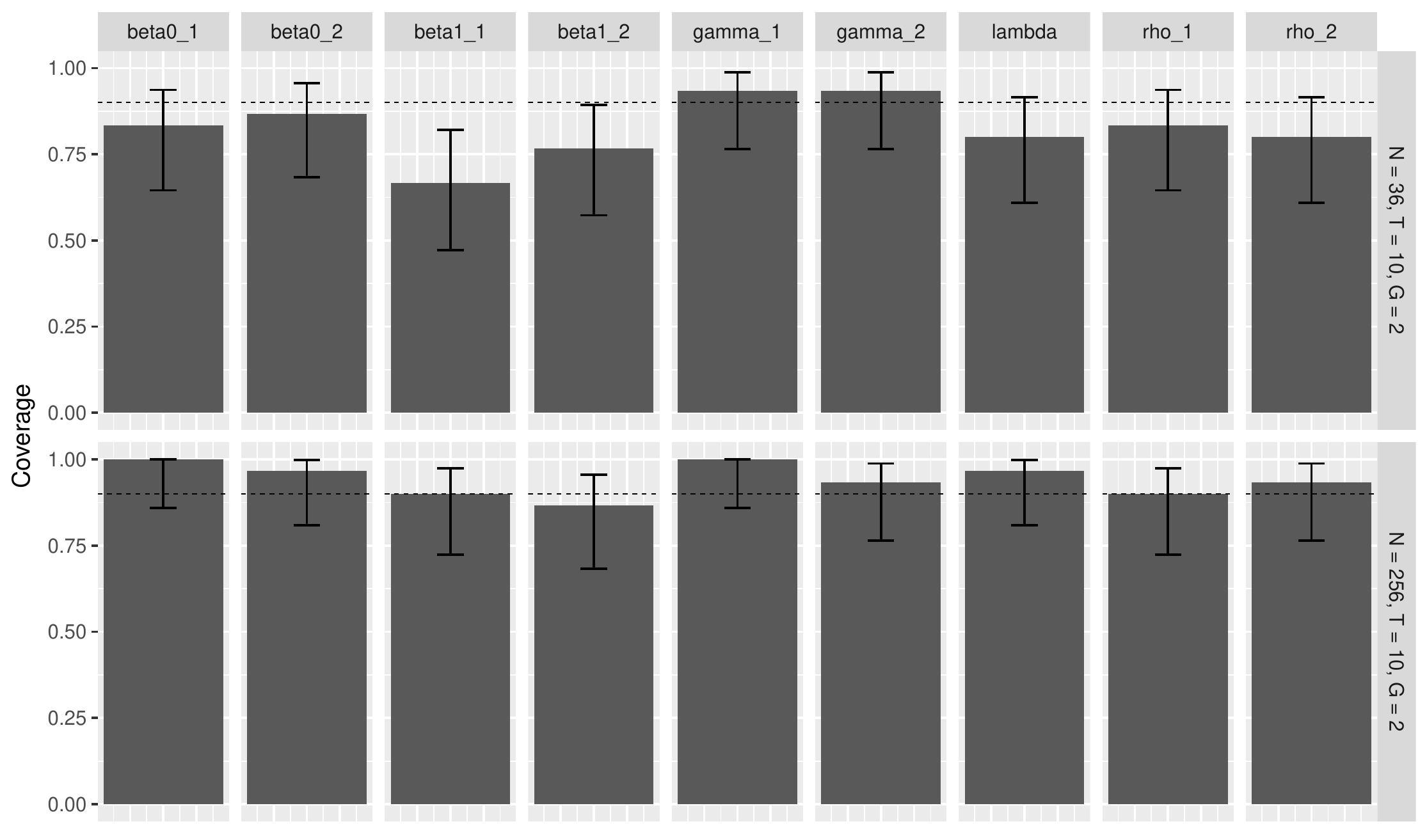}
\caption{90\% CI coverage probability for all parameter estimates in a spatio-temporal Probit model with a bivariate outcome. Estimates based on 50 Monte Carlo simulations per sample size. Error bars represent 95\% CI for the coverage proportion.}
\label{fig:coverage_binary}
\end{figure}

\end{document}